\def\div{\mathop{\rm div}}
\def\const{\mathop{\rm const}}
\def\tr{\mathop{\rm tr}}
\def\t0{\mathop{\overset{0}{\rm tr}}}
\def\lin{\mathop{\rm span}}
\def\DO{\mathop{{\rm d}o}}
\newtheorem{thm}{Theorem}[section]
\newtheorem{prop}[thm]{Proposition}
\theoremstyle{definition}
\newtheorem{defn}[thm]{Definition}
\newtheorem{rem}[thm]{Remark}
\newcommand{\thmref}[1]{Theorem~\ref{#1}}
\newcommand{\propref}[1]{Proposition~\ref{#1}}
\newcommand{\defnref}[1]{Definition~\ref{#1}}
\newcommand{\remref}[1]{Remark~\ref{#1}}
\begin{document}

\title{On the existence of rigid spheres in four-dimensional spacetime manifolds}

\author{{\em Hans-Peter Gittel}
\thanks{E-mail: gittel@math.uni-leipzig.de} \\
Department of Mathematics, University of Leipzig, \\
Augustusplatz~10, 04109~Leipzig, Germany
\and {\em Jacek Jezierski}
\thanks{E-mail: Jacek.Jezierski@fuw.edu.pl} \\
 Department of Mathematical Methods in
	Physics, \\ University of Warsaw,
	ul. Pasteura 5, 02-093 Warszawa, Poland
\and  {\em Jerzy Kijowski}
\thanks{E-mail: kijowski@cft.edu.pl} \\
Centrum Fizyki Teoretycznej PAN, \\
Aleja Lotnik\'ow 32/46, 02-668 Warsaw, Poland }


\maketitle

\begin{center}
\Large\emph{Dedicated to Professor Dr.~Eberhard Zeidler\\
on the occasion of his 75th birthday}
\end{center}


\newpage
\begin{abstract}
This paper deals with the generalization of usual round spheres in the flat Minkowski spacetime to the case of a generic four-dimensional spacetime manifold $M$. We consider geometric properties of sphere-like submanifolds in $M$ and impose conditions on its extrinsic geometry, which lead to a definition of a {\em rigid sphere}. The main result is a local existence theorem concernig such spheres. For this purpose we apply the surjective implicit function theorem. The proof is based on a detailed analysis of the linearized problem and leads to an
eight-parameter family of solutions in case when the metric tensor $g$ of $M$ is from a certain neighbourhood of the flat Minkowski metric. This contribution continues the study of rigid spheres in~\cite{GJKL}.
\end{abstract}

\numberwithin{equation}{section}

\section{Introduction}

In \cite{GJKL} we have introduced the concept of ``rigidity'' of spheres in generic Riemannian three-manifolds and gave a short outline for four-dimensional Lorentzian spacetimes (cf.~\cite[Subsection 2.5]{GJKL}). This paper presents geometrical considerations which lead to the definition of rigid
spheres in this case and proves a local existence result concerning such spheres.

Let $M_0$ be the flat Minkowski spacetime, i.e.~the space
${\mathbb R}^4$ parameterized by the Lorentzian coordinates
$(x^\alpha)=(x^0, \ldots ,x^3)$ and equipped with the metric $\eta
= (\eta_{\alpha \beta}) = {\rm diag} (-1,1,1,1)$ (greek indices
run always from 0 to 3). \\
Consider in $M_0$ a {\em round sphere}, i.e.~the
two-dimensional submanifold defined by
\[
  S_{T,R} := \Big\lbrace x \in {\mathbb R}^4 \
  \Bigl| \ x^0 = T \ , \ \sum_{i=1}^3 (x^i)^2 = R^2 \Big\rbrace
   \ ,
\]
where the time $T\in {\mathbb R}$ and the sphere's radius $R>0$ are fixed.
It may be easily verified (cf.~Subsection 3.1) that the
submanifold fulfills the following conditions:
\[
\|{\bf k}\| = \tfrac {2}{R} =\left\langle  \|{\bf k}\|  \right\rangle = \const \, ,
\]
where 
$\|{\bf k}\| = \sqrt{\eta_{\alpha \beta} k^\alpha k^\beta}$, and
${\bf k} = (k^\alpha )$ denotes the extrinsic curvature vector of
$S_{T,R}$ which is orthogonal to $S_{T,R}$ (cf.
Definition \ref{Def1}).
Throughout the paper, $ \left\langle  f  \right\rangle$ is the abbreviation of the mean
value of a function $f$ over its domain (here: $S_{T,R}$).
Moreover, the {\em extrinsic torsion} (cf. Definition \ref{Def1}) of
$S_{T,R}$ vanishes:
\[
{\bf t} = 0 \ .
\]
Because of similarity in $M_0$, there is an
eight-parameter family of such spheres (cf. \cite[Subsection 2.5]{GJKL}).


Two-dimensional round spheres in $M_0$ may be used as building
blocks to generate important geometric objects like three-hyperplanes,
light cones, hyperboloids etc. These objects are necessary in the
hamiltonian description of the field evolution in the special
relativity,
cf.~\cite{CJK}.

In General Relativity Theory, the choice of a Cauchy hypersurface
plays the role of a specific gauge condition imposed on the
hamiltonian description of the field evolution. An intelligent use
of gauge may simplify considerably the analysis of a physical
problem (like e.g.~{\em maximal gauge} in the proof of the
positivity of gravitational energy or the Bondi condition in the
analysis of the radiation on the {\em Scri} hypersurface).


A long experience supports the conjecture that the ``good''
hypersurface is often a collection of ``good'' two-spheres (cf.~\cite{pieszy, LS} in connection with description of gravitational energy).
This problem is important especially
for an asymptotically flat spacetime, where the use of
such intrinsically defined surfaces could help us to construct a
``privileged'' parametrization, at least far away from
gravitational sources. It would serve as a tool of a deep analysis
of the structure of gravitational radiation at infinity.

Therefore a question arises, whether or not geometric conditions
may be used in case of a generic spacetime $M$, equipped
with a non-flat metric structure $g$, to select a family of
privileged two-surfaces.

Unfortunately, the conditions: $\|{\bf k}\| -  \left\langle  \|{\bf k}\|  \right\rangle = 0$ and
${\bf t} = 0$ form, in general, an over-determined system of
equations which may admit no solution. The existence of an
eight-parameter family of two-surfaces fulfilling these equations
in the Minkowski spacetime is not a generic phenomenon but is a
mere coincidence due to the maximal symmetry of $M_0$. In the present paper, we show how to relax the above conditions in such a way that: \\
1) they admit generically an eight-parameter family of solutions and,
moreover, \\
2) in  case of the Minkowski space, the solutions are
the same as before. \\
By analogy with the flat Minkowski geometry, a generic surface
satisfying our improved condition will be called a {\em rigid
sphere}.

The paper is organized as described below. In the next section, we recall some
geometric properties (extrinsic curvature and torsion, projections)
of sphere-like submanifolds in a generic four-dimensional spacetime and give
the definition of rigid
spheres. The existence of such spheres is a highly nonlinear problem
which is locally solved by application of an implicit function
theorem from functional analysis. For this purpose it is necessary
to analyze the linearized problem in detail. The linearization of the nonlinear problem is carried out in Section 3 whereas Section 4 contains the derivation of crucial properties of the linearized operator using Sobolev spaces $H^{k}$. Our main result is presented in the last section.


\section{Spheres in a generic spacetime}

\subsection{Topological spheres}

Let us consider now a generic four-dimensional spacetime manifold
$M$ with the metric tensor $g$ of signature $(-,+,+,+)$.
Any choice of local coordinates $(x^0, \ldots ,x^3)$ induces the
natural basis
\[
  {\bf e}_\alpha := \frac \partial{\partial x^\alpha},
  \qquad  \alpha = 0, \dots 3
\]
in fibers of the tangent bundle $T M$. If ${\bf v}=v^\alpha
{\bf e}_\alpha$ and ${\bf w}=w^\alpha {\bf e}_\alpha$,
then\footnote{Einstein summation convention over repeated lower
and upper indices is always assumed.} the inner product $g({\bf
v},{\bf w})$ on $M$ is given by $g({\bf v},{\bf w}) =
g_{\alpha
\beta} v^\alpha w^\beta$, where $g_{\alpha \beta}:= g({\bf
e}_\alpha , {\bf e}_\beta)$.

Moreover, let ${\cal S} \subset M$ be a {\em
  topological sphere}, i.e. a two-dimensional
submanifold diffeomorphic to the {\em standard unit sphere} $S^2
\subset {\mathbb R}^3$. The diffeomorphism in question  $F: S^2
\rightarrow M$, it is locally described by four functions
\begin{equation} \label{eq1a}
F: (u^2,u^3) \mapsto x=(x^{\alpha} ) \in M \ , \
x^{\alpha} = x^{\alpha} (u^2,u^3) \, ,
\end{equation}
where $(u^A)=(u^2,u^3)=(\theta, \phi)$ are standard spherical coordinates\footnote{Throughout the paper, capital letters $A,B, \ldots \! \in \! \{2,3
\},$ whereas small letters $a,b, \ldots  \!\in \! \{0,1 \}.$} on
$S^2$.
Using
parametrization (\ref{eq1a}) we may take vectors
\begin{equation}
\label{eq4a}
  {\bf b}_A = \partial_A x := \frac{\partial x}{\partial u^A } =
\frac{\partial x^\alpha}{\partial u^A } {\bf e}_\alpha
\end{equation}
as a basis in each tangent space $T_x {\cal S}$. Consequently, the
{\em induced metric} $s = F^*g$ on ${\cal S}$ is represented by
the metric tensor
\begin{equation} \label{eq4}
  s_{AB} = g ({\bf b}_A ,{\bf b}_B) =
  \frac {\partial x^{\alpha}}{\partial u^A} \
  g_{\alpha \beta} \
    \frac {\partial x^{\beta}}{\partial u^B}
   \ .
\end{equation}
Associated with metric $s=(s_{A B})$, the covariant derivative $
\overset{s}{\nabla}_A $ and the Laplace-Beltrami operator
$\overset{s}{\Delta}:= \overset{s}{\nabla}{^A}
\overset{s}{\nabla}_A = s^{A B} \overset{s}{\nabla}_{A}
\overset{s}{\nabla}_B$ on ${\cal S}$ are given. As usual, $(s^{A
B})$ is the inverse matrix of $s:=(s_{AB})$.

As an example of the settings above, we may take the system of
spherical coordinates in the Minkowski spacetime:
\begin{gather}
\label{eq1} x^0 = t \ , \ x^1 = r \sin \theta \cos \phi \ , \ x^2
= r \sin \theta \sin \phi \ , \ x^3 = r \cos \theta \, , \\ \label{eq2}
\text{where } t \in {\mathbb R}\ , \ r \in [0,\infty) \ , \ \theta
\in [ 0, \pi ] \ , \ \phi \in [0,2 \pi) \ ,
\end{gather}
and the round sphere
$ S_{T,R} = \{ t = T , r = R \}  .$
In these coordinates, Minkowski metric reads as
\begin{equation} \label{eq2a}
  \eta_{\alpha \beta} \mbox{\rm d}x^{\alpha} \mbox{\rm d}x^{\beta} = - (\mbox{\rm d}t)^2 +
  (\mbox{\rm d}r)^2 + r^2 \left( (\mbox{\rm d} \theta)^2 + \sin^2 \theta (\mbox{\rm d} \phi )^2 \right) \ .
\end{equation}
This formula implies that the induced metric $s_{AB}$ on
$S_{T,R}$ is proportional to the {\em standard round metric }
$\overset{\circ}{\eta}_{A B}$ on the unit sphere $S^2$:
\[
\label{int}
s_{AB} \mbox{\rm d}u^{A} \mbox{\rm d}u^{B} = R^2 \left( (\mbox{\rm d} \theta)^2 + \sin^2 \theta
(\mbox{\rm d} \phi)^2 \right) \ ,
\]
or, equivalently, $s_{AB} = R^2 \ \overset{\circ}{\eta}_{A B}$,
where
\begin{equation}
\label{eq1b} \left( \overset{\circ}{\eta}_{A B} \right) := \left(
\begin{matrix} 1 & 0 \\ 0 & \sin^2 \theta \end{matrix} \right) \ .
\end{equation}

\subsection{Extrinsic geometry of a topological sphere}

Let ${\cal S} \subset M$ be a topological sphere. Then the {\em extrinsic curvature tensor} ${\bf K}_{A B}$ of ${\cal S}$
is defined as the orthogonal (with respect to ${\cal S}$) part of
the covariant derivative of the tangent vector field ${\bf b}_B$
in direction of another tangent vector ${\bf b}_A$:
\begin{equation}
  \label{eq4c} {\bf K}_{A B} :=
  \left( \overset{g}{\nabla}_{{\bf b}_A}
  {\bf b}_B \right)^\perp \in T{\cal S}^\perp \ .
\end{equation}
Especially convenient expression for
the components of ${\bf K}_{A
B}$ is obtained if we choose coordinates $(x^\alpha)$ in a
neighbourhood of ${\cal S}$ in such a way that
the transversal
coordinates $x^0, x^1$ are constant on ${\cal S}$, consequently,
${\bf e}_2, {\bf e}_3$ are tangent to ${\cal S}$ and $(x^A)=(u^A)$.
Such coordinate system will be called {\em adapted to} ${\cal
S}$. Hence, ${\bf b}_A= {\bf e}_A$ and
\begin{equation} \label{eq6}
  \overset{g}{\nabla}_{{\bf b}_A}
  {\bf b}_B = G^{\alpha}_{A B} {\bf e}_{\alpha} =
   G^a_{A B} {\bf e}_a + G^C_{A B} {\bf e}_C \ ,
\end{equation}
with the Christoffel
symbols $G^{\alpha}_{\beta \gamma}$ of  metric $g.$
Using the projection ``$\perp$'' we have
\begin{equation}\label{b_a}
   {\bf b}_a := {\bf e}_a^\perp = {\bf e}_a - n_a^{\ C} {\bf e}_C \, , \quad
\text{where} \quad   n_a^{\ C} = g_{aB}\, s^{BC} \, ,
\end{equation}
as basis of the normal bundle $T{\cal S}^\perp$ and we obtain the
following formula for the components $K^a_{A B}$ of the vector
${\bf K}_{A B}=K^a_{A B} {\bf b}_a$, valid in adapted coordinates:
\begin{equation}\label{kAB}
    K^a_{A B} = G^a_{A B} \ .
\end{equation}

\begin{defn}
\label{Def1} The {\em extrinsic curvature vector} ${\bf k}$ of ${\cal S}$ is defined by
\begin{equation}\label{k}
    {\bf k} =
    s^{AB} {\bf K}_{A B} \in T{\cal S}^\perp \ ,
\end{equation}
Moreover, we consider its {\em extrinsic torsion}  at the point $x$:
\begin{equation} \label{eq6b}
t_A :=  g\big({\bf m} , \overset{g}{\nabla}_A {\bf n} \big) \ , \quad
{\bf t}(x)=t_A {\rm d}u^A \in T^*_x{\cal S}
\end{equation}
where
\begin{equation}\label{m}
    {\bf n} := \frac {\bf k}{\| {\bf k} \| } \qquad \text{({\em first} orthonormal vector)},
\end{equation}
${\| {\bf k} \| } = \sqrt{k^a g_{ab} k^b}$ provided ${\bf k}$ is
space-like ($\|{\bf k}\|^2 := g(${\bf k}$,${\bf k}$)>0$),
and ${\bf m}$ denotes the {\em second} orthonormal vector, i.e. the unit vector orthogonal to both ${\bf k}$ and $S$.
\end{defn}

\begin{rem}
\label{Rem-tersion}
Geometrically, $t_A$ represents the connection in the bundle of orthonormal vectors over ${\cal S}$. For any curve in ${\cal S}$ which is geodesic with respect to its intrinsic geometry $s_{AB}$, one might consider its ``Dreibein'' $({\bf v},{\bf n},{\bf m})$, where ${\bf v}$ is the tangent vector to the curve. Then
$g(\bf t,\bf v)$
describes the angular velocity of rotation of the ``Zweibein'' $({\bf n},{\bf m})$. By analogy with classical Frenet-Serret formulae, we call $\bf t$ the extrinsic torsion of ${\cal S}$, which shortens considerably our terminology.
\end{rem}

\begin{rem}
\label{Rem0}
Using adapted coordinates and setting ${\bf k} = k^a {\bf b}_a$,
\begin{equation}
\label{eq6a} k^a :
= s{^{A B}} G^a_{A B}  \ .
\end{equation}
follows. If ${\bf n} = n^a {\bf b}_a$, we get ${\bf m} = m^a {\bf b}_a$ with
\begin{equation}\label{n}
    m^a = s^{ab} {\varepsilon}_{bc} n^c \ .
\end{equation}
Here, $(s^{ab})$ is the two-dimensional inverse to the metric
$(s_{ab})$, which is given by
\begin{equation}\label{s}
    s_{ab} :=g({\bf b}_a,{\bf b}_b) =
    g_{ab} - n_a^{\ A} n_b^{\ B} g_{AB}
\end{equation}
(cf.~(\ref{b_a})),
and $\varepsilon_{ab} = \left| \det ( s_{cd} )
\right|^{1/2} E_{ab}$ indicates the {\em Levi-Civita tensor} for $(
s_{cd} )$ with $E_{ab}$ being the sign of the permutation $ \left(
\begin{smallmatrix} 0 & 1 \\ a
    & b \end{smallmatrix} \right).$
Of course,
it is always possible to choose an
extension $x^A$ of coordinates $u^A$ from ${\cal S}$ to its
neighbourhood in such a way that ${\bf e}_a$ are orthogonal to
${\cal S}$. In this case, $n_a^{\ A} = 0$, $s_{ab} =
g_{ab}$, and
the above formulae simplify substantially.
\end{rem}



Given a surface ${\cal S} \subset M$ and a system of
adapted coordinates $x^\alpha$, every other surface lying
sufficiently close to ${\cal S}$ may be uniquely described by two
functions $x^0= f^0(u^2,u^3)$ and $x^1 =f^1(u^2,u^3)$ living on
$S^2$.
Any geometric
condition imposed on a topological sphere may, therefore, be
translated into a condition for these two functions. Especially, we are going
to formulate these conditions in terms of two scalar functions,
namely:
the norm of the extrinsic curvature ${\bf k}$ and the two-dimensional divergence of the extrinsic torsion
${\bf t}$. However, mimicking strictly the situation in the flat
Minkowski space, where the round spheres satisfy two conditions:
1) $\|{\bf k}\|-  \left\langle  \|{\bf k}\|  \right\rangle = 0$ and\footnote{By
$ \left\langle f \right\rangle$ we denote the mean value of the function $f$ over ${\cal
S}$.}
2) $\div {\bf t} = 0$, does
not lead to a satisfactory formulation of the problem  Indeed, the above conditions are too restrictive as it will be
seen in the sequel. The main problem arises from the fact that
these conditions imply two (elliptic) differential equations for
$f^a$
which, in general, do not admit an eight-parameter family of
solutions. It turns out that the correct relaxation of these
conditions, valid not only in the Minkowski geometry but also in a
generic geometry $g$, consists in
vanishing a certain projection of the above two
characteristic functions. We define these projections in the next subsection.

\subsection{Projections to eigenspaces}

Let $\overset{\circ}{\Delta} = \overset{\circ}{\nabla}{^A}
\overset{\circ}{\nabla}_A$ be the standard Laplace-Beltrami
operator on $S^2$ and ${\cal M}^3$ be its eigenspace
associated  to eigenvalue $-2$. Then ${\cal M}^3$ is a three-dimensional space and
spanned by the linear
coordinate functions \eqref{eq1} on ${\mathbb R}^3$ restricted to $S^2$, i.e.
\begin{equation}
\label{eq2b} {\cal M}^3 \! := \lin \{\lambda^1\! , \lambda^2\!  ,\lambda^3 \}
\! := \lin \! \left\{\! \!  \sqrt{   \! \tfrac{3}{4 \pi}} \sin \theta \cos \phi , \! \sqrt{
\! \tfrac{3}{4 \pi}} \sin \theta \sin \phi , \! \sqrt{ \! \tfrac{3}{4 \pi}}
\cos \theta \right\} \! .
\end{equation}
Here,  $\lambda^1, \lambda^2,
\lambda^3$ are spherical harmonics of degree 1 on $S^2$  (cf. \cite{Erd, J1, Tr}) and satisfy the equation
\begin{equation}\label{eigenvalue}
\overset{\circ}{\Delta} \lambda^i = -2 \lambda^i \, .
\end{equation}
Moreover, they are normalized
to form an
orthonormal basis of ${\cal M}^3$ in the
Hilbert space $L^2(S^2,\DO)$ with respect to the standard measure
\begin{equation}
\label{measure}
\DO := \sqrt{\det \overset{\circ}{\eta}} \ \mbox{\rm d}^2u \, .
\end{equation}
By $l^i = \left( F^{-1} \right)^*
(\lambda^i)$ we get the corresponding eigenfunctions of $\Delta:= \overset{\sigma}{\Delta}$ on
${\cal S} = im(F)$ and the corresponding space
\begin{equation}\label{linfctn}
   {\cal L}^3({\cal S}) := \left( F^{-1}
   \right)^* ({\cal M}^3)
\end{equation}
where
\begin{align}
\label{rndmetric}
   \sigma &:= \left( F^{-1}
   \right)^*\overset{\circ}{\eta}   &\text{\em (round sphere metric)}, \\
\label{rndmeasure}
  \mbox{\rm d}\mu &:= \sqrt{\det \sigma} \ \mbox{\rm d}^2 u \qquad &\text{\em (round sphere measure)}.
\end{align}
In \cite[Subsection 2.4]{GJKL},  several projections to eigenspaces of $\Delta$ are defined.  They will be applied in  the next subsection.
\begin{defn}
\label{Def0} Let $f \in L^2({\cal S},\mbox{\rm d}\mu)$.
\begin{align}
    f^{\mathbf{m}} &:= P_m(f) := \frac{1}{4\pi}\int_{{\cal S}}
  f \, \mbox{\rm d}\mu \in \lin \{ 1 \}
  &\quad \text{\em  (monopole part)}\, , \label{mon} \\
     f^{\mathbf{d}} &:=
  P_d(f) := \sum_{i=1}^3 l^i \int_{{\cal S}} l^i
  f \, \mbox{\rm d}\mu \in {\cal L}^3({\cal S})
  &\quad \text{\em  (dipole part)} \, , \label{eq1c} \\
     f^{\mathbf{md}} &:=
  P_{md}(f):= P_{m}(f) + P_{d}(f) \in {\cal L}^4({\cal S})
  &\quad \text{\em  (mono-dipole part)} \, , \label{eq1md} \\
     f^{\mathbf{w}} &:=
  (I-P_{md})(f):= f- f^{\mathbf{m}} - f^{\mathbf{d}}
  &\quad \text{\em  (wave part)} \, . \label{eq1w}
\end{align}
\end{defn}
Here we set
\begin{equation}
\label{eq2c} {\cal L}^4({\cal S}) := \lin \{ 1 \}
 \oplus {\cal L}^3({\cal S})
 = \lin \{1,l^1, l^2 ,l^3 \} = \left( F^{-1} \right)^* ({\cal M}^4) \, .
\end{equation}
with ${\cal M}^4 := \lin \{ 1 \} \oplus {\cal M}^3 = \lin \{1,\lambda^1, \lambda^2 , \lambda^3 \}$ being the collection  of all affine functions on $\mathbb{R}^3$
restricted to $S^2$.

All these expressions along with the linear functions $l^i = \left( F^{-1} \right)^*
(\lambda^i)$ are intrinsically defined on ${\cal S} \subset M$ if we use the privileged family of {\em equilibrated,
conformally spherical} isomorphisms $F:$
$S^2 \rightarrow M$. Such parametrizations are constructed in \cite[Subsections 2.1--2.3]{GJKL} and
possess the following properties:
\begin{enumerate}
	\item [i)] The metric tensor $s_{AB}$ on ${\cal S}$ is \emph{conformally equivalent}
to
$\overset{\circ}{\eta}_{A B}$, i.e.
 $   s_{AB} = p \cdot \overset{\circ}{\eta}_{A B} \ ,$
with  a (sufficiently smooth) positive function $p$ on ${\cal S}$.
	\item [ii)] The dipole part of the conformal factor $p$ vanishes, i.e.
  $  P_d(p) = 0 \ .$
\end{enumerate}
By \cite[Theorem 1]{GJKL}, each topological sphere ${\cal S} \subset M$ admits a unique (up to rotations) equilibrated spherical system.

\begin{rem}
\label{Rem1c}
Let us formulate conditions i) and ii) for the isomorphism $F$ in different ways:
Denoting the {\em trace} $\tr M$ and the {\em traceless part} $\t0 M_{A B}$ of a covariant tensor $M=( M_{A B} )$ defined on $S^2$ by
\begin{equation}
\label{eq7} \tr M := \overset{\circ}{\eta}{}^{A B} M_{A B} \quad \text{and} \quad
\t0 M_{A B} := M_{A B} - \tfrac{1}{2} \overset{\circ}{\eta}{}_{A B} \tr M \ ,
\end{equation}
respectively, we obtain $ p= \tfrac{1}{2} \tr s $ in i). Hence, this condition
is equivalent to
\begin{equation}\label{conf1}
    \t0 s_{AB} = 0 \ .
\end{equation}
Since all coefficients in $  P_d(p) = 0$ have to vanish, the condition ii) can be equivalently rewritten in the form:
\begin{equation}\label{bary}
{\bf X} := \left(  \left\langle l^1 \right\rangle, \left\langle l^2 \right\rangle, \left\langle l^3 \right\rangle \right) =0 \qquad \text{\em  (equilibration condition)} \ .
\end{equation}
As usual, the mean value of a function $h$ on ${\cal S}$ is given by
\begin{equation}\label{average}
     \left\langle h \right\rangle := \left( \int_S \sqrt{\det s}\ \mbox{\rm d}^2 u \right)^{-1}
    \cdot \int_S h \sqrt{\det s}\ \mbox{\rm d}^2u  \ .
\end{equation}
We want to point out that the mean values of $l^i$ with respect to another measure vanish, too. These relations can be expressed by
$ \left\langle p^{-1} l^i  \right\rangle =0$ which follows from (\ref{measure}), (\ref{rndmetric}), (\ref{rndmeasure}), (\ref{conf1}), and
\begin{equation}
\label{mean}
P_{m}(l^i) = \frac{1}{4\pi} \int_{\cal S} l^i \, \mbox{\rm d}\mu = \frac{1}{4\pi} \int_{S^2} \lambda^i \DO =0
\end{equation}
\end{rem}

\subsection{Rigid spheres}

Now, we are ready to formulate our geometric conditions which we
are going to impose on two geometric scalar objects of a topological sphere ${\cal S}$: \\
1) the length $k:=\|{\bf k}\| = \sqrt{k^a s_{ab} k^b} $ of the extrinsic
curvature vector, and \\
2) the ``divergence'' of the extrinsic torsion
\begin{equation}\label{q-div-t}
    {\bf q} = \div {\bf t}:=
    \partial_A\left(\sqrt{\det s}\
    s^{AB} \ t_B \right) \, .
    \end{equation}
We stress that the above quantity does not depend upon the choice
of the metric on ${\cal S}$, within the same conformal class.
Indeed, when ${\tilde s}_{AB} = f s_{AB}$, then we have
$\sqrt{\det {\tilde s}}\ {\tilde s}^{AB} = \sqrt{\det s}\ s^{AB}$.
Hence,
we may replace the induced metric $s$ in the above formula by the
round sphere metric $\sigma$. The quantity ${\bf q}$ is a scalar
density which is converted into a scalar by
\begin{equation}\label{scalarq}
    q := \frac {1}{\sqrt{\det \sigma}} \ {\bf q} \, ,\quad \text{i.e.}
\quad q= \overset{\sigma}{\nabla}_A t^A \, .
\end{equation}


\begin{defn}\label{rigid}
A topological sphere ${\cal S} \subset M$ will be called a
{\em rigid sphere} if it possesses an equilibrated spherical parameterization $F$ such that the curvature vector ${\bf k}$ of ${\cal S}$ is space-like and that its both scalars $k$ and $q$ contain only the mono-dipole
part, i.e.~if ${\cal S}=F(S^2)$, where \eqref{conf1}, \eqref{bary} and the two following equations are satisfied:
\begin{align}
  k^{\mathbf{w}}=(I-P_{md}) \, k &= 0 &\quad \text{\em  (curvature condition)}\, , \label{md-k}\\
  q^{\mathbf{w}}=(I-P_{md}) \, q &= 0 &\quad \text{\em  (torsion condition)}\, . \label{md-q}
\end{align}
\end{defn}

\begin{rem}
\label{Rem1}  Observe that the torsion scalar $q$ satisfies the
identity
$
  P_m (q) = \int_{\cal S} q \, \mbox{\rm d}\mu =0 \ ,
$
as a consequence of the definition \eqref{scalarq}. This implies
that the operator $P_{md}$ in \eqref{md-q} can be
replaced by $P_d$. Moreover, \eqref{md-k} and \eqref{md-q} are equivalent to
$$
  k={\| {\bf k} \| } \in {\cal M}^4 \  \text{ and} \quad
  q=\overset{\sigma}{\nabla}_A t^A \in {\cal M}^3 \ ,
$$
respectively.\footnote{Hence, Definition \ref{rigid} coincides with \cite[Definition 4]{GJKL}.}
These relations can be considered as
``gauge conditions'' for extrinsic curvature and torsion.
They are discussed for
the linear case in \cite[Appendix A]{J2}.
\end{rem}

Both conditions, \eqref{md-k} and \eqref{md-q}, are fulfilled by the
eight-parameter family of round spheres in Minkowski space. We are
going to prove in the sequel that these are the only solutions
and, moreover, that \eqref{md-k} and \eqref{md-q} admit always an
eight-parameter family of solutions if the space-time metric does
not differ too much from the flat Minkowski metric.

\section{The linearized problem}


\subsection{Minkowski case}

In case of the Minkowski metric $g=\eta,$ the solution to our
nonlinear problem \eqref{conf1}, \eqref{bary}, \eqref{md-k},
and \eqref{md-q} is obvious since the round spheres
$S=S_{T,R}$ are rigid for all $T\in {\mathbb R}$ and
$R>0.$ To see this analytically, we set
\[
\xi = \xi (t,r,\theta , \phi) = ( t , r \sin \theta \cos \phi ,
r \sin \theta \sin \phi , r \cos \theta ) \, ,
\]
where $t,r,\theta , \phi$ vary within the ranges given by
(\ref{eq2}).  Then, for fixed $T, R,$ it is easy to check that
\begin{align}
\label{eq9}
  &\text{the transformation} \quad x (\theta , \phi)
  = \xi(T,R,\theta , \phi) \\
\intertext{yields an equilibrated spherical coordinate system. Moreover,}
\label{eq9a}
  &\text{the curvatures } \quad \kappa^a := {\eta}^{A B}
  \Gamma^a_{A B}
\end{align}
solve equations (\ref{md-k}) and (\ref{md-q}) at $x \in S.$
Here, $\Gamma^{\alpha}_{\beta \gamma} $ are the Christoffel symbols
with respect to the basis
${\boldsymbol
  \beta}_{\alpha} := \partial_{\alpha} \xi$ where
\begin{equation}
\label{eq11}
\begin{split}
  \partial_0 \xi & = \tfrac{\partial \xi}{\partial t} = (1,0,0,0) \ , \\
  \partial_1 \xi & = \tfrac{\partial \xi}{\partial r} =
  (0,\sin \theta \cos \phi , \sin \theta \sin \phi, \cos \theta) \ , \\
  \partial_2 \xi & = \tfrac{\partial \xi}{\partial \theta} = (0,R \cos
  \theta \cos \phi , R \cos \theta \sin \phi, -R \sin \theta)
  \ , \\
  \partial_3 \xi & = \tfrac{\partial \xi}{\partial \phi} = (0,-R \sin
  \theta \sin \phi , R \sin \theta \cos \phi ,0)
\end{split}
\end{equation}
and $\eta({\boldsymbol \beta}_{\alpha}, {\boldsymbol \beta}_{\gamma}) =
\eta_{\alpha \gamma} $. Hence, the induced metric of $S$
equals to
${\eta}_{A B} = R^2 \overset{\circ}{\eta}_{AB}.$ Recall that the
nonvanishing components of the Christoffel symbols on $S$ are
\begin{gather}
\label{eq12}
\begin{split}
  \Gamma^1_{22} &= \Gamma^{r}_{\theta \theta} = - R \ , \
  \Gamma^1_{33} = \Gamma^{r}_{\phi \phi}
  = - R \sin^2 \theta \ , \\
  \Gamma^2_{12} &= \Gamma^{\theta}_{r \theta} = 1/R \ , \
  \Gamma^3_{13} = \Gamma^{\phi}_{r \phi}
  = 1/R \ , \\
  \Gamma^2_{33} &= \Gamma^{\theta}_{\phi \phi} = -\sin \theta \cos
  \theta \ , \ \Gamma^3_{23} = \Gamma^{\phi}_{\theta \phi}
  =  \cot \theta . \\
\end{split}
\intertext{Therefore, we have the relations:} \label{eq12a}
\Gamma^1_{AB} = -R \overset{\circ}{\eta}_{A B} \, , \
\Gamma^A_{1B} = \tfrac{1}{R} \delta^A_B \, .
\end{gather}
By (\ref{eq6a}), the coordinates ${\kappa}^a$ of the extrinsic curvature vector of $S$ are given and
\begin{equation}
\label{eq13}
{\kappa}^0 = 0 \ , \ {\kappa}^1 = - \tfrac{2}{R} \ ,
\ \sqrt{{\kappa}^a {\eta}_{ab} {\kappa}^b} = \tfrac{2}{R} \ , \
\nabla_A {\kappa}^a =0 \ .
\end{equation}
\defnref{Def1}
implies that, in this case, the first
orthonormal vector of $S$ equals to ${\boldsymbol \beta}_1$,
the second one equals to ${\boldsymbol \beta}_0$, and  the extrinsic torsion
vector $\bf t $ vanishes. Consequently, equations (\ref{md-k}) and
(\ref{md-q}) are fulfilled
(cf. \cite[Example 3]{GJKL}).

To solve the nonlinear problem \eqref{conf1}, \eqref{bary},
\eqref{md-k}, and \eqref{md-q} in general case of a spacetime manifold
$M$, equipped with the metric tensor $g$, we apply the
implicit function theorem (cf., e.g., \cite[Vol.~1]{Z}). For this
purpose, it is useful to represent these equations in the compact
form
\begin{equation}
\label{eq14} {\cal F} (x,g,T,R) := \left(\t0 s_{AB}, {\bf X},
k^{\mathbf{w}} , q^{\mathbf{w}} \right) = 0 \ .
\end{equation}

%
%

Here, $x=x (\theta , \phi)$, $(\theta , \phi) \in [ 0, \pi ] \times
[0,2 \pi)$, encodes our unknown, i.e.~a mapping
$F : S^2 \rightarrow
M $. In a local coordinate system ($x^\alpha$),  $F$ is described by four functions $x^\alpha
= x^\alpha (\theta , \phi)$ which belong to an appropriate
function space specified later. The nonlinear
operator ${\cal F}$ acting on this space represents our
conditions which ensure that the resulting image $F(S^2)
\subset M$ is, indeed, a rigid sphere in the sense of Definition \ref{rigid}.
We are going to
solve the system of equations in \eqref{eq14} separately, for any given values of the
parameters $T$ and  $R > 0$
(cf.~Subsection 5.1).

\begin{rem}
\label{Rem2} The operator equation (\ref{eq14}) consists of three
algebraic equations ${\bf X}=0$ and of four partial differential
equations for the unknowns $x^{\alpha}, \alpha = 0,
\ldots ,3$, defined on the standard sphere $S^2 \subset {\mathbb
R}^3$. In particular, the two equations: $\t0 s_{AB} = 0$ are of the
first differential order. The curvature equation $k^{\mathbf{w}} = 0$
is of the second order, whereas the torsion equation $q^{\mathbf{w}}
= 0$ is of the fourth one.
\end{rem}


\subsection{Linearization of the conformity and equilibration conditions}

We will linearize the operator ${\cal F}$ in (\ref{eq14}) with
respect to the variable $x$ at the point $(x,g,T,R) :=
(\xi,\eta,T,R)$, where $\xi =\xi(T,R,\theta , \phi)$ is given by
formula \eqref{eq9} for an arbitrary value of $T$ and $\eta$ is
the flat Minkowski metric. Setting $x = \xi + \delta x$,
we derive explicit expressions of the (partial) Fr\'echet
derivative
\[
{\cal D}_x {\cal F} (\xi,\eta,T,R) [\delta x] \ .
\]
It is convenient to represent all the terms at a point $x=x(
\theta , \phi) \in {\cal S} = {\cal S} (g,T,R),$ with respect to
the given basis ${\boldsymbol \beta}_0 (\xi),{\boldsymbol \beta}_1
(\xi), {\boldsymbol \beta}_2 (\xi), {\boldsymbol \beta}_3 (\xi)$,
${\boldsymbol \beta}_{\alpha} := \partial_{\alpha} \xi$, $\xi \in
S$. In particular, for $\delta x$ we make the ansatz
\begin{equation}
\label{eq25} \delta x = \psi^{\alpha} {\boldsymbol \beta}_{\alpha}, \
\text{i.e.} \ x = \xi + \psi^{\alpha} (\theta , \phi) {\boldsymbol \beta}_{\alpha}
(\xi) \ .
\end{equation}

Hence, from (\ref{eq25}), (\ref{eq12}), and (\ref{eq12a}), we get the following expression of the basis vectors $ {\bf b}_2, {\bf b}_3$ of the tangent space $T_x {\cal S}$:
\begin{equation}
\label{eq24}
\begin{split}
{\bf b}_A (x)&= \partial_A x = {\boldsymbol \beta}_A (\xi)
+ {\nabla}_A \psi^{\alpha} {\boldsymbol \beta}_{\alpha} (\xi) \\
 &= \left( 1+\frac{\psi^1}{R} \right) {\boldsymbol \beta}_A  +
\overset{\circ}{\nabla}_A \psi^a {\boldsymbol \beta}_a +
\overset{\circ}{\nabla}_A \psi^B {\boldsymbol \beta}_B - R  \overset{\circ}{\eta}_{A B} \psi^B {\boldsymbol \beta}_1 \ .
\end{split}
\end{equation}
In the last formula, we have introduced the two-dimensional
covariant derivative $\overset{\circ}{\nabla}_A$ which we apply to
a vector field $(\omega^0,\omega^1,\omega^2,\omega^3)$ defined on
$S^2$ in a splitted way
\begin{equation}
\label{eq24a} \overset{\circ}{\nabla}_A \omega^{B} := \partial_A
\omega^B + {\Gamma}^B_{AC} \omega^C \ , \
\overset{\circ}{\nabla}_A \omega^a := \partial_A \omega^a \ .
\end{equation}
This means that the tangent part $(\omega^A)$ of the above field
is treated as a genuine vector on $S^2$, whereas its transversal
components $\omega^0$ and $\omega^1$ are treated as ``scalars''.
The reason for this choice is that the ``tangent-transversal''
and the ``transversal-transversal'' components of the
four-dimensional connection will be explicitly described in terms
of the extrinsic curvature ${\bf k}$ and the extrinsic torsion~${\bf t}.$ Here, we
take into account only its remaining ``tangent-tangent'' part. \\
Let us notice that the Christoffel symbols ${\Gamma}^C_{AB}$
listed in (\ref{eq12}) are the same as for the standard sphere
$S^2$ equipped by the canonical metric $\overset{\circ}{\eta}_{A
B}.$ Relation (\ref{eq24}) implies the representation of the
induced metric $s$ given by (\ref{eq4}):
\begin{align}
s_{A B} &= \eta ({\bf b}_A ,{\bf b}_B ) = \frac {\partial x^{\alpha}}{\partial u^A} \
  \eta_{\alpha \beta} \ \frac {\partial x^{\beta}}{\partial u^B}  \nonumber \\
\label{eq11b}
&= R^2 \left( \overset{\circ}{\eta}_{A B}
+ 2 \frac{\psi^1}{R} \overset{\circ}{\eta}_{A B} +
\overset{\circ}{\nabla}_A \psi_B + \overset{\circ}{\nabla}_B \psi_A \right)+ O(\delta^2) \ .
\end{align}
Here and in the following, $O(\delta^2)$ denotes the collection of
all terms quadratic in $ \delta x $ or their derivatives.
If  the tensor $(s_{A B})$ in (\ref{eq11b}) is associated to a point
$x \in {\cal S}$, we obtain
\begin{align}
s_{A B} (x) &= s_{A B} (\xi) + \psi^{\alpha} \partial_{\alpha} s_{A B} (\xi) + O(\delta^2)
= s_{A B} (\xi) + \psi^{\alpha} \partial_{\alpha} \eta_{A B} (\xi) + O(\delta^2) \nonumber \\
\label{eq11a}
&= R^2 \left( \overset{\circ}{\eta}_{A B}
+ 4 \frac{\psi^1}{R} \overset{\circ}{\eta}_{A B} + \psi^C \partial_C \overset{\circ}{\eta}_{A B} +
\overset{\circ}{\nabla}_A \psi_B + \overset{\circ}{\nabla}_B \psi_A \right)+
O(\delta^2) \, ,
\end{align}
and immediately deduce the expansion of $\t0 s_{AB}$ in  conformity condition \eqref{conf1}.

\begin{prop}
\label{Prop3}
The linearization (in $x$) of $ \ \t0 s_{AB} (x)= s_{AB}(x)- \tfrac{1}{2}\left( \overset{\circ}{\eta}_{A B} \overset{\circ}{\eta}{}^{CD} s_{CD} \right) (x)$,
$x \in {\cal S}$
equals to
\[
 R^2 \left( \overset{\circ}{\nabla}_A \psi_B + \overset{\circ}{\nabla}_B \psi_A
- \overset{\circ}{\eta}_{A B} \overset{\circ}{\nabla}{_C} \psi^C \right) \ .
\]
\end{prop}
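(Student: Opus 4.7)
The plan is to obtain the stated expression by a direct trace decomposition of (\ref{eq11b}). At the base point $x=\xi$ one has $s_{AB}(\xi)=R^{2}\overset{\circ}{\eta}_{AB}$, which is pure $\overset{\circ}{\eta}$-trace, hence $\t0 s_{AB}(\xi)=0$. Linearising $\t0 s_{AB}$ at $\xi$ therefore reduces to computing the $\t0$-projection of the first order variation $\delta s_{AB}$.

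From (\ref{eq11b}) I would read off
\[
\delta s_{AB} \;=\; 2R\,\psi^{1}\,\overset{\circ}{\eta}_{AB}
\;+\; R^{2}\bigl(\overset{\circ}{\nabla}_{A}\psi_{B}+\overset{\circ}{\nabla}_{B}\psi_{A}\bigr),
\]
noting that $\psi^{0}$ decouples at this order because $\eta_{0A}=0$, while the conformal piece $2R\psi^{1}\overset{\circ}{\eta}_{AB}$ arises from the Christoffel identity $\Gamma^{A}_{1B}=\tfrac{1}{R}\delta^{A}_{B}$ in (\ref{eq12a}). Using $\overset{\circ}{\eta}^{AB}\overset{\circ}{\eta}_{AB}=2$ and $\overset{\circ}{\eta}^{AB}(\overset{\circ}{\nabla}_{A}\psi_{B}+\overset{\circ}{\nabla}_{B}\psi_{A})=2\,\overset{\circ}{\nabla}_{C}\psi^{C}$, the trace works out to
\[
\overset{\circ}{\eta}^{CD}\delta s_{CD} \;=\; 4R\,\psi^{1} + 2R^{2}\,\overset{\circ}{\nabla}_{C}\psi^{C}.
\]
Subtracting $\tfrac12\overset{\circ}{\eta}_{AB}$ times this trace from $\delta s_{AB}$ exactly cancels the $2R\psi^{1}\overset{\circ}{\eta}_{AB}$ term and produces
\[
R^{2}\bigl(\overset{\circ}{\nabla}_{A}\psi_{B}+\overset{\circ}{\nabla}_{B}\psi_{A}-\overset{\circ}{\eta}_{AB}\,\overset{\circ}{\nabla}_{C}\psi^{C}\bigr),
\]
which is the claim, i.e.\ the conformal Killing operator applied to the tangential perturbation $\psi^{A}$.

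There is no genuine obstacle here: once (\ref{eq11b}) is at hand, the whole content is algebraic. The only point requiring attention is the bookkeeping of which pieces of $\delta s_{AB}$ lie in the $\overset{\circ}{\eta}$-trace and which lie in the traceless part of $\Sym(T^{*}S^{2}\otimes T^{*}S^{2})$. In particular, both the radial component $\psi^{1}$ (through its conformal coupling) and the divergence of the tangential deformation are annihilated by $\t0$, so that neither $\psi^{0}$ nor $\psi^{1}$ appears in the final expression; what remains depends only on the traceless conformal-Killing part of $\overset{\circ}{\nabla}_{A}\psi_{B}$ on the unit sphere.
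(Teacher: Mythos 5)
Your trace algebra is correct and you land on the right formula, but you have linearized the wrong expression. You read $\delta s_{AB}$ off from \eqref{eq11b}, which is the expansion of the induced metric referred to the base point $\xi$; the Proposition, however, concerns $\t0 s_{AB}(x)$ with everything evaluated at the displaced point $x\in{\cal S}$, and the paper's starting point is \eqref{eq11a}, i.e.
\[
s_{AB}(x)=R^2\Big( \overset{\circ}{\eta}_{A B}
+ 4 \tfrac{\psi^1}{R} \overset{\circ}{\eta}_{A B} + \psi^C \partial_C \overset{\circ}{\eta}_{A B} +
\overset{\circ}{\nabla}_A \psi_B + \overset{\circ}{\nabla}_B \psi_A \Big)+O(\delta^2)\,,
\]
which differs from \eqref{eq11b} by the convective terms $\psi^{\alpha}\partial_{\alpha}\eta_{AB}(\xi)=2R\psi^1\overset{\circ}{\eta}_{AB}+R^2\psi^C\partial_C\overset{\circ}{\eta}_{AB}$. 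The first of these is pure trace and harmless (it only changes your $2R\psi^1$ into $4R\psi^1$), but the second is \emph{not} traceless with respect to $\overset{\circ}{\eta}_{AB}(\xi)$ — for instance $\psi^2\partial_\theta\overset{\circ}{\eta}_{AB}=\psi^2\,\mathrm{diag}(0,2\sin\theta\cos\theta)$ is not proportional to $\mathrm{diag}(1,\sin^2\theta)$ — so it would survive your subtraction of $\tfrac12\overset{\circ}{\eta}_{AB}$ times the trace and spoil the claimed formula.

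The one genuinely non-trivial point of the paper's proof is precisely how this term disappears: the reference metric in the definition of $\t0$ is also evaluated at $x$, and $\overset{\circ}{\eta}_{AB}(x)=\overset{\circ}{\eta}_{AB}(\xi)+\psi^C\partial_C\overset{\circ}{\eta}_{AB}(\xi)+O(\delta^2)$ (together with the corresponding expansion of $\overset{\circ}{\eta}{}^{AB}(x)$ quoted in the paper's proof), so that $R^2\big(\overset{\circ}{\eta}_{AB}(\xi)+\psi^C\partial_C\overset{\circ}{\eta}_{AB}(\xi)\big)$ is pure trace with respect to $\overset{\circ}{\eta}(x)$ to first order; equivalently, $\t0\,\partial_C\overset{\circ}{\eta}_{AB}$ vanishes up to first order. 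Your proof never confronts this cancellation, so as written it establishes the analogue of the Proposition for the quantity referred to $\xi$, not the statement actually made; the fact that the two answers coincide is exactly the content of the step you omitted. Supplying that one observation would make your argument complete, and the remaining bookkeeping (the $\psi^0$ decoupling via $\eta_{0A}=0$ and the trace subtraction killing all $\psi^1$ contributions) is fine as you have it.
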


\begin{proof}
By (\ref{eq11a}) and $ \overset{\circ}{\eta}_{A B} (x) = \overset{\circ}{\eta}_{A B} (\xi) + \psi^C \partial_C \overset{\circ}{\eta}_{A B} (\xi) + O(\delta^2)$,
all terms in the above expansion are obvious since
$\t0 \partial_C \overset{\circ}{\eta}_{A B} (\xi) $ vanishes up to first order.
To see this, observe

\parbox{0.9\textwidth}{
\[
\overset{\circ}{\eta}{}^{AB}(x) = \overset{\circ}{\eta}{}^{AB}(\xi) - \left(  \overset{\circ}{\eta}{}^{AD} \overset{\circ}{\eta}{}^{BE} \psi^C \partial_C \overset{\circ}{\eta}_{DE} \right) (\xi) + O(\delta^2) \ .
\]}
\end{proof}

To linearize the average $ \left\langle l^i \right\rangle$ in the equilibration condition \eqref{bary},
 we use the abbreviation $ \left\langle h \right\rangle^{\circ}$ for the average of a function $h$ defined on
$S^2$, i.e.
\begin{equation}\label{staverage}
     \left\langle h \right\rangle^{\circ} := \frac{1}{4\pi} \int_{S^2}  h \DO \ .
\end{equation}

\begin{prop}
\label{Prop4}
The linearization (in $x$) of $ \left\langle l^i \right\rangle$
equals to
\[
  \left\langle  \lambda^i ( \tfrac{4}{R} \psi^1 + \partial_A \psi^A  )   \right\rangle^{\circ} \, .
\]
\end{prop}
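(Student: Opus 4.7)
My strategy is to write $\langle l^i\rangle$ as a ratio of two integrals on $S^2$ and exploit the vanishing of the numerator at the base parameter, which reduces the linearization to linearizing the numerator alone. Since $l^i = (F^{-1})^*\lambda^i$ satisfies $F^*l^i = \lambda^i$ no matter how the parameterization $F$ is perturbed, one has
\[
\langle l^i\rangle \;=\; \frac{\int_{S^2}\lambda^i\sqrt{\det s(u)}\,\mathrm{d}^2 u}{\int_{S^2}\sqrt{\det s(u)}\,\mathrm{d}^2 u}.
\]
At $x=\xi$ the induced metric equals $s_{AB}=R^2\overset{\circ}{\eta}_{AB}$, hence the denominator evaluates to $4\pi R^2$, while the numerator equals $R^2\int_{S^2}\lambda^i\,\DO=0$ by the orthogonality of the dipole harmonic $\lambda^i$ to constants, cf.~\eqref{mean}. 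The quotient rule therefore collapses to
\[
\mathcal{D}_x\langle l^i\rangle[\delta x]\;=\;\frac{1}{4\pi R^2}\int_{S^2}\lambda^i\,\delta\!\sqrt{\det s}\,\mathrm{d}^2 u,
\]
so only the first variation of the area element contributes.

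To compute $\delta\!\sqrt{\det s}$ I apply the standard identity $\delta\!\sqrt{\det s}=\tfrac{1}{2}\sqrt{\det s_0}\,s_0^{AB}\delta s_{AB}$ using the expansion \eqref{eq11a} of $s_{AB}(x)$ already derived in this subsection. Contracting with $s_0^{AB}=R^{-2}\overset{\circ}{\eta}^{AB}$ and invoking the elementary identities $\overset{\circ}{\eta}^{AB}\partial_C\overset{\circ}{\eta}_{AB}=2\Gamma^A_{AC}$ and $\overset{\circ}{\nabla}_A\psi^A=\partial_A\psi^A+\Gamma^A_{AC}\psi^C$ yields a first-order expression for $\delta\!\sqrt{\det s}$ as $R^2\sqrt{\det\overset{\circ}{\eta}}$ times a combination of $\tfrac{4}{R}\psi^1$, $\partial_A\psi^A$, and divergence-type contributions arising from the $\psi^C\partial_C\overset{\circ}{\eta}_{AB}$ term of \eqref{eq11a}.

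The main task is then to dispose of those divergence-type contributions once multiplied by $\lambda^i$ and integrated over $S^2$. The key tool here is the identity $\sqrt{\det\overset{\circ}{\eta}}\,\overset{\circ}{\nabla}_A\psi^A=\partial_A(\sqrt{\det\overset{\circ}{\eta}}\,\psi^A)$, which converts covariant divergences into ordinary ones, combined with Stokes' theorem on the closed manifold $S^2$ to annihilate the resulting total derivatives. This mirrors the cancellation mechanism used in the proof of Proposition~\ref{Prop3}, where the $\psi^C\partial_C\overset{\circ}{\eta}_{AB}$ terms cancel against the corresponding variations of $\overset{\circ}{\eta}$ in the definition of the trace. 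I expect this algebraic rearrangement to be the only delicate step; after it, dividing by $4\pi R^2$ and pulling $\DO=\sqrt{\det\overset{\circ}{\eta}}\,\mathrm{d}^2 u$ inside the average directly produces the claimed formula $\langle\lambda^i(\tfrac{4}{R}\psi^1+\partial_A\psi^A)\rangle^\circ$.
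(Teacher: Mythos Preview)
Your overall route coincides with the paper's: pull $\langle l^i\rangle$ back to $S^2$, use \eqref{mean} to reduce to linearizing the numerator alone, expand $\sqrt{\det s}$ from \eqref{eq11a}, and integrate against $\lambda^i$. The paper computes $\det s(x)$ explicitly and then takes the square root; your trace formula is an equivalent way to obtain the first variation. The difference lies in the final simplification: the paper invokes no integration by parts whatsoever, but asserts the \emph{pointwise} identity
\[
\sqrt{\det s(x)}=R^2\Bigl(1+\tfrac{4}{R}\psi^1+\partial_A\psi^A\Bigr)\sqrt{\det\overset{\circ}{\eta}}+O(\delta^2),
\]
after which the integral against $\lambda^i$ follows in one line from \eqref{mean}, \eqref{average}, \eqref{staverage}.

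Your alternative plan --- to reach the same endpoint via Stokes --- has a gap precisely at the step you flag as delicate. Tracing \eqref{eq11a} with $R^{-2}\overset{\circ}{\eta}^{AB}$ produces, beyond $\tfrac{4}{R}\psi^1+\partial_A\psi^A$, the surplus $2\,\Gamma^A_{AC}\psi^C$ coming from the $\psi^C\partial_C\overset{\circ}{\eta}_{AB}$ term. But
\[
\sqrt{\det\overset{\circ}{\eta}}\cdot 2\,\Gamma^A_{AC}\psi^C \;=\; 2\,\psi^C\partial_C\sqrt{\det\overset{\circ}{\eta}}
\]
is \emph{not} a total derivative, and multiplied by $\lambda^i$ it does not integrate to zero over $S^2$ (test: $\psi^\theta\equiv 1$, $\psi^\phi\equiv 0$, $\lambda^i=\lambda^3\propto\cos\theta$ gives a nonvanishing integral). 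The identity $\sqrt{\det\overset{\circ}{\eta}}\,\overset{\circ}{\nabla}_A\psi^A=\partial_A(\sqrt{\det\overset{\circ}{\eta}}\,\psi^A)$ you cite handles the $\overset{\circ}{\nabla}_A\psi_B$ contribution from \eqref{eq11a}, not this one; and the cancellation mechanism of Proposition~\ref{Prop3} relied on the simultaneous variation of $\overset{\circ}{\eta}_{AB}(x)$ in the trace, which has no analogue here. So the ``disposal via Stokes'' will not go through as outlined; you need the paper's pointwise reduction (or an equivalent rearrangement that absorbs the $\Gamma$--term before integration) rather than an integration-by-parts argument.
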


\begin{proof}
We first use (\ref{eq11a})
to get
\begin{align*}
\det s (x) &=  R^4 \left( \left( 1 + 4 \frac{\psi^1}{R} \right) ^2 \det\overset{\circ}{\eta}  +
\psi^A \partial_A \det\overset{\circ}{\eta}
+ 2 \overset{\circ}{\eta}_{22} \overset{\circ}{\nabla}_3 \psi_3 + 2 \overset{\circ}{\eta}_{33} \overset{\circ}{\nabla}_2 \psi_2 \right) (\xi) + O(\delta^2) \\
&= R^4 \left( 1 + 8 \frac{\psi^1}{R} + 2 \psi^A {\Gamma}^{B}_{AB} +
2 \overset{\circ}{\nabla}{_A} \psi^A \right) \det\overset{\circ}{\eta}
+ O(\delta^2)\, ,
\end{align*}
since $ \overset{\circ}{\eta}_{22}= \overset{\circ}{\eta}{}^{33} \det\overset{\circ}{\eta}$, $ \overset{\circ}{\eta}_{33}= \overset{\circ}{\eta}{}^{22} \det\overset{\circ}{\eta}$, and
$\partial_A \det\overset{\circ}{\eta} = 2  {\Gamma}^{B}_{AB} \det\overset{\circ}{\eta}$.
Expansion of the square root yields
\[
\sqrt{\det s (x)} =  R^2 \left( 1+ 4 \frac{\psi^1}{R} +
\partial_A \psi^A \right) \sqrt{\det\overset{\circ}{\eta}} + O(\delta^2) \, ,
\]
and hence
\[
 \int_{\cal S} l^i \sqrt{\det s}\ \mbox{\rm d}^2 u = R^2 \int_{S^2} \lambda^i \left( 4 \frac{\psi^1}{R}
 + \partial_A \psi^A \right)  \sqrt{\det \overset{\circ}{\eta}} \ \mbox{\rm d}^2u + O(\delta^2) \ .
\]
In the derivation of the last formula, relations (\ref{rndmetric}),
(\ref{rndmeasure}), and (\ref{mean})
have to be observed. Finally, by (\ref{average}) and (\ref{staverage}),
the assertion follows.
\end{proof}

\subsection{Linearization of the curvature and torsion conditions}

To get the linearized expression of the extrinsic curvature vector (\ref{k})
at $g=\eta$, we use adapted coordinates and the transformation rule for the Christoffel
symbols ${  \Gamma}^{\alpha}_{\beta \gamma} (\xi)$, if we pass from the natural
basis ${\boldsymbol \beta}_{\alpha}(\xi)$ to the basis ${\bf  b}_{\alpha}(x)$.
Let
\begin{equation}
\label{eq16}
{\bf b}_{\alpha} = (\delta_{\alpha}^{\gamma} +
b_{\alpha}^{\gamma}) {\boldsymbol \beta}_{\gamma} \quad
\text{and} \quad
{\boldsymbol \beta}_{\alpha} = (\delta_{\alpha}^{\gamma} +
\beta_{\alpha}^{\gamma}) {\bf b}_{\gamma}
\end{equation}
then $b_A^{\gamma}$ follows from (\ref{eq24}). To calculate $b_a^{\gamma}$,
 we recall (\ref{b_a})
and notice
\begin{equation}
    {\bf b}_a = {\boldsymbol \beta}_a^\perp = {\boldsymbol \beta}_a - \eta ( {\boldsymbol \beta}_a, {\bf b}_B ) s^{BC} {\bf b}_C
= {\boldsymbol \beta}_a - {\nabla}_B \psi_a \eta^{BC} {\boldsymbol \beta}_C + O(\delta^2)
		\label{eq12b}
\end{equation}
by (\ref{eq24}), (\ref{eq11a}). Hence,
\begin{align*}
        b_A^C&= \overset{\circ}{\nabla}_A \psi^C + \frac{\psi^1}{R} \delta_A^C \ , \quad
        b_A^c = \overset{\circ}{\nabla}_A \psi^c - R \overset{\circ}{\eta}_{AB} \psi^B \delta_1^c \ , \\
        b_a^C&= - \frac{1}{R^2}\overset{\circ}{\nabla}{^C} \psi^b {\eta}_{ab} + \frac{\psi^C}{R} \delta_a^1 \ , \quad   b_a^c=0   \, , \
         \text{and} \quad \beta_{\alpha}^{\gamma} = - b_{\alpha}^{\gamma} +
O(\delta^2)
\end{align*}
in (\ref{eq16}). Note $(\delta_{\alpha}^{\mu} +
b_{\alpha}^{\mu})(\delta_{\mu}^{\gamma} + \beta_{\mu}^{\gamma}) =
\delta_{\alpha}^{\gamma} .$ If $G^{\alpha}_{\beta \gamma}$ are the
Christoffel symbols of the metric $\eta$ with respect to the basis
${\bf b}_{\alpha}$ associated to $\xi \in S$, we get
\[
  G^{\alpha}_{\beta \gamma} = \left( \delta^{\alpha}_{\lambda} +
    \beta^{\alpha}_{\lambda} \right) \left( \delta_{\beta}^{\mu} +
    b_{\beta}^{\mu} \right) \left( \delta_{\gamma}^{\nu} +
    b_{\gamma}^{\nu} \right) \Gamma^{\lambda}_{\mu \nu} +
  \left( \delta^{\alpha}_{\nu} + \beta^{\alpha}_{\nu} \right)
\partial_{\beta} \left( \delta_{\gamma}^{\nu} + b_{\gamma}^{\nu}
  \right) \ .
\]
Especially
\begin{align}
G^{a}_{A B} (\xi)
  &= { \Gamma}^{a}_{AB} + \beta^{a}_C {\Gamma}^{C}_{AB} + b_A^C {\Gamma}^a_{CB}
  + b_B^C {\Gamma}^a_{AC} + \partial_A b^a_B + O(\delta^2) \nonumber \\
\label{eq25b} &= \left( 1 + 2 \frac{\psi^1}{R} \right){\Gamma}^{a}_{AB} (\xi) +
\overset{\circ}{\nabla}_A \overset{\circ}{\nabla}_B \psi^a - R \left( 2 \overset{\circ}{\nabla}_A \psi_B + \overset{\circ}{\nabla}_B \psi_A \right) \delta_1^a + O(\delta^2) \ , \\
G^{a}_{A c}(\xi)&=  \beta^{a}_B {\Gamma}^{B}_{Ac} (\xi) + b_c^B {\Gamma}^a_{AB} (\xi) + O(\delta^2) \nonumber \\
 &= \left(  -\overset{\circ}{\nabla}_B \psi^a + R \psi_B \delta_1^a \right) {\Gamma}^{B}_{Ac}  + \left( - \frac{1}{R^2}\overset{\circ}{\nabla}{^B} \psi_c + \frac{\psi^B}{R} \delta_c^1 \right) {\Gamma}^a_{AB} + O(\delta^2) \nonumber \\
\label{eq25c} &= \frac{1}{R} \left( \overset{\circ}{\nabla}{_A} \psi_c \delta_1^a - \overset{\circ}{\nabla}_A \psi^a \delta_c^1  \right) + O(\delta^2) \ .
\end{align}
Here, the representation of $\Gamma^{\alpha}_{\beta \gamma}$ in
(\ref{eq12}), (\ref{eq12a}) and the definition (\ref{eq24a}) have been exploited.
Finally, we deduce
\begin{prop}
\label{Prop1}
The linearization (in $x$) of $k(x) = \sqrt{k^a s_{ab} k^b (x)}$
equals to
\[
- \frac{1}{R^2} ( \overset{\circ}{\Delta} +2 ) \psi^1 +
\frac{1}{R} \overset{\circ}{\nabla}{_A} \psi^A \, .
\]
\end{prop}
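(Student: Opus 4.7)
The plan is to linearize $k^2 = k^a s_{ab} k^b$ rather than $k$ itself, and then use $\delta k = \tfrac{1}{2k}\,\delta(k^2)$ at the background value $k|_\xi = 2/R$ from \eqref{eq13}. Because the only nonzero background component is $\kappa^1 = -2/R$, with $s_{11}|_\xi = 1$ and $s_{01}|_\xi = 0$, the product rule immediately collapses the linearization to
\[
\delta k = -\,\delta k^1 + \tfrac{1}{R}\,\delta s_{11}.
\]

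The term $\delta s_{11}$ vanishes to first order. Indeed, from \eqref{eq12b} one has ${\bf b}_1 = {\boldsymbol\beta}_1 - \overset{\circ}{\nabla}_B\psi_1\,\eta^{BC}{\boldsymbol\beta}_C + O(\delta^2)$, and since the Minkowski spherical frame is orthogonal at \emph{every} point (not merely on $S$), the identities $\eta({\boldsymbol\beta}_1,{\boldsymbol\beta}_1)\equiv 1$ and $\eta({\boldsymbol\beta}_1,{\boldsymbol\beta}_C)\equiv 0$ hold globally, giving $s_{11}(x) = 1 + O(\delta^2)$.

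The real content lies in $\delta k^1$. From $k^a = s^{AB} G^a_{AB}$ we obtain
\[
\delta k^1 = s^{AB}\big|_\xi\,\delta G^1_{AB} + \delta s^{AB}\,\Gamma^1_{AB}\big|_\xi,
\]
and we assemble each piece. For $\delta G^1_{AB}$ we take the basis-change expression \eqref{eq25b} \emph{together with} the position-shift correction $\psi^\gamma\partial_\gamma\Gamma^1_{AB}(\xi) = -\psi^1\,\overset{\circ}{\eta}_{AB} - R\psi^C\partial_C\overset{\circ}{\eta}_{AB}$, which captures the $(r,\theta)$-dependence of $\Gamma^1_{AB}(r,\theta) = -r\,\overset{\circ}{\eta}_{AB}(\theta)$; for $\delta s^{AB}$ we invert \eqref{eq11a} at linear order. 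Taking the $\overset{\circ}{\eta}$-trace with the background $s^{AB}\big|_\xi = R^{-2}\overset{\circ}{\eta}^{AB}$ and with $\Gamma^1_{AB}\big|_\xi = -R\,\overset{\circ}{\eta}_{AB}$, the Hessian term in \eqref{eq25b} delivers $R^{-2}\overset{\circ}{\Delta}\psi^1$; the symmetrized gradient contributes $-3R^{-1}\overset{\circ}{\nabla}_A\psi^A$ from $\delta G^1$ and $+2R^{-1}\overset{\circ}{\nabla}_A\psi^A$ from $\delta s^{AB}$; the $\psi^C\partial_C\overset{\circ}{\eta}_{AB}$ piece is absorbed into the background via $\overset{\circ}{\eta}_{AB}(x) = \overset{\circ}{\eta}_{AB}(\xi) + \psi^C\partial_C\overset{\circ}{\eta}_{AB}$; and the three $\psi^1$-contributions --- the factor $(1+2\psi^1/R)$ multiplying $\Gamma^a_{AB}$ in \eqref{eq25b}, the radial Taylor shift $-\psi^1\overset{\circ}{\eta}_{AB}$, and the $4\psi^1/R$ piece inside the inverse-metric trace from \eqref{eq11a} --- combine to $+2R^{-2}\psi^1$. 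The net is $\delta k^1 = R^{-2}(\overset{\circ}{\Delta}+2)\psi^1 - R^{-1}\overset{\circ}{\nabla}_A\psi^A$, and $\delta k = -\delta k^1$ gives the stated formula.

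The main obstacle is precisely this three-way bookkeeping for the $\psi^1$-coefficient, where it is easy to drop one of the three sources and miss the ``$+2$'' next to $\overset{\circ}{\Delta}$. Two independent sanity checks make the result transparent. First, a constant radial shift $\psi^1\equiv \delta R$ must reproduce the elementary $\delta k = -2\delta R/R^2$ coming from $k = 2/(R+\delta R)$. Second, an infinitesimal Euclidean translation of the center of $S$ corresponds to $\psi^1\in{\cal M}^3$, and the new surface is still a round sphere of the same radius, so $\delta k$ must vanish on ${\cal M}^3$; by \eqref{eigenvalue} this forces exactly the combination $\overset{\circ}{\Delta}+2$ that annihilates dipole spherical harmonics.
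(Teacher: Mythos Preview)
Your proof is correct and follows essentially the same route as the paper: both linearize $G^{1}_{AB}(x)$ via \eqref{eq25b} together with the position shift $\psi^{\alpha}\partial_{\alpha}\Gamma^{1}_{AB}$, contract with the inverse induced metric, and use that $s_{ab}$ is unperturbed to first order (the paper's \eqref{eq26} is exactly your $\delta s_{11}=0$). The only organizational difference is that the paper absorbs the first-order pieces of $G^{1}_{AB}(x)$ back into $s_{AB}(x)$ so that the trace $s^{AB}G^{1}_{AB}$ becomes trivial, whereas you carry out a straight product-rule bookkeeping and reduce $\delta k$ to $-\delta k^{1}$ at the outset; your two sanity checks (constant radial shift and dipole translation) are a nice verification not present in the paper.
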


\begin{proof}
Let $x \in {\cal S}$. Then $G^{a}_{A B} (x)$ expands to
\begin{align*}
 G^{a}_{A B} (x) &= G^{a}_{A B} (\xi) + \psi^{\alpha} \partial_{\alpha} G^{a}_{A B} (\xi) + O(\delta^2)
= G^{a}_{A B} + \psi^{\alpha} \partial_{\alpha} \Gamma^{a}_{A B} + O(\delta^2) \\
&= \left( 1 + 3 \frac{\psi^1}{R} \right){\Gamma}^{a}_{AB}  - R \left( \psi^C \partial_C \overset{\circ}{\eta}_{A B} + 2 \overset{\circ}{\nabla}_A \psi_B + \overset{\circ}{\nabla}_B \psi_A \right) \delta_1^a
+ \overset{\circ}{\nabla}_A \overset{\circ}{\nabla}_B \psi^a + O(\delta^2) \\
&= - \frac{1}{R} \left( s_{A B} (x) - R \psi^1 \overset{\circ}{\eta}_{AB} + R^2 \overset{\circ}{\nabla}_A \psi_B \right) \delta_1^a +
\overset{\circ}{\nabla}_A \overset{\circ}{\nabla}_B \psi^a + O(\delta^2)
\end{align*}
by (\ref{eq12a}), (\ref{eq11a}), and (\ref{eq25b}). Use the expression of the inverse $ \left( s^{A B} \right) $ to matrix $ \left( s_{A B} \right) $ in formula (\ref{eq6a}) to obtain
\begin{equation}
\label{eq23}
\begin{split}
  k^0 (x)&
= \dfrac{1}{R^2} \overset{\circ}{\Delta} \psi^0 + O(\delta^2) \ , \\
  k^1 (x)&
= -  \dfrac{2}{R} + \dfrac{1}{R^2} \left( \overset{\circ}{\Delta} +2 \right) \psi^1
- \frac{1}{R} \overset{\circ}{\nabla}_A \psi^A + O(\delta^2) \, .
\end{split}
\end{equation}
Since $s_{ac}= \eta ( {\bf b}_a , {\bf b}_c)= {\eta}_{ac} + O(\delta^2) $
by (\ref{s}) and (\ref{eq12b}), we get
\begin{gather}
\label{eq26}
s_{ac} (x) =  {\eta}_{ac} (\xi) + \psi^{\alpha} \partial_{\alpha} s_{ac} (\xi) + O(\delta^2)
= {\eta}_{ac} + O(\delta^2) \\
\label{eq27}
\text{and} \quad k^a s_{ab} k^b (x) = \frac{4}{R^2} \left( 1 - \frac{1}{R} ( \overset{\circ}{\Delta}
+2 ) \psi^1 + \overset{\circ}{\nabla}_A \psi^A \right) + O(\delta^2) \  .
\end{gather}
Taylor expansion of the square root function gives the assertion.
\end{proof}

\begin{prop}
\label{Prop2}
The linearization (in $x$) of $q(x) = \overset{\sigma}{\nabla}_A t{^A}
(x)$, $x \in {\cal S}$,
equals to
\[
 -\frac{1}{2R} \overset{\circ}{\Delta} ( \overset{\circ}{\Delta} +2 ) \psi^0 \ .
\]
\end{prop}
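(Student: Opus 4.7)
The plan is to linearize $t_A = g(\mathbf{m},\overset{g}{\nabla}_A \mathbf{n})$ directly at the round sphere and then take the Minkowskian divergence. Since the background torsion vanishes (as recorded in Subsection~3.1), every contribution we need to retain is genuinely first order in $\delta x$, and the entire computation amounts to tracking how the normal frame $(\mathbf{n},\mathbf{m})$ twists relative to itself.

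First I would expand the normal frame. Recycling the expansion of $k^a$ from the proof of \propref{Prop1}, together with $\|\mathbf{k}\|=\tfrac{2}{R}+O(\delta)$ and $s_{ab}=\eta_{ab}+O(\delta^2)$ from (\ref{eq26}), normalisation yields
\[
n^0 = \tfrac{1}{2R}\,\overset{\circ}{\Delta}\psi^0 + O(\delta^2), \qquad n^1 = -1 + O(\delta^2);
\]
the unit-length constraint pins the ``longitudinal'' component to its background value, so $\mathbf{n}$ acquires only a ``timelike tilt'' at linear order. Formula (\ref{n}) then gives $m_0 = -1 + O(\delta^2)$ and $m_1 = O(\delta)$.

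To compute $t_A$ I would pass to the orthogonal extension $\mathbf{e}_a\perp\mathcal{S}$ permitted by \remref{Rem0}. Then $g_{aB}=0$, so the tangential Christoffel $G^C_{Ab}$ cannot couple to the normal $\mathbf{m}$, and the torsion collapses to
\[
t_A \;=\; m_a\bigl(\partial_A n^a + G^a_{Ab}\,n^b\bigr).
\]
Because $m_1$ is already first order, while $\partial_A n^1$ and $G^1_{Ab}n^b$ vanish at the background, only the $a=0$ slot survives linearly. Reading $G^0_{A1}(\xi) = -\tfrac{1}{R}\,\overset{\circ}{\nabla}_A\psi^0 + O(\delta^2)$ and $G^0_{A0}(\xi) = O(\delta^2)$ off from (\ref{eq25c}) and substituting the expansion of $n^0$ produces
\[
t_A \;=\; -\tfrac{1}{2R}\,\overset{\circ}{\nabla}_A\bigl(\overset{\circ}{\Delta}+2\bigr)\psi^0 + O(\delta^2).
\]

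Since $t_A$ is already first order and $\sigma_{AB}=\overset{\circ}{\eta}_{AB}$ at the background, the divergence in (\ref{scalarq}) linearises to the round-sphere divergence $\overset{\circ}{\nabla}^A$, and contracting with $\overset{\circ}{\eta}^{AB}$ yields the stated biharmonic $-\tfrac{1}{2R}\overset{\circ}{\Delta}(\overset{\circ}{\Delta}+2)\psi^0$. The step I expect to be the main obstacle is the bookkeeping for the frame $(\mathbf{n},\mathbf{m})$: one must verify that the first-order corrections to $\mathbf{m}$ and to the Levi--Civita density $\varepsilon_{ab}$ (which, at the background, is simply $E_{ab}$ by (\ref{eq26})) pair only with the vanishing zero-order of $\overset{g}{\nabla}_A\mathbf{n}$ and therefore contribute nothing at linear order. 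Once that cancellation is justified, the remainder is routine algebra driven by the expansions already assembled for \propref{Prop1}.
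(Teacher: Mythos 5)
Your proposal is correct and follows essentially the same route as the paper: expand $\mathbf{n}$ and $\mathbf{m}$ using the curvature expansion from \propref{Prop1}, read off the normal part of $\overset{g}{\nabla}_A\mathbf{n}$ from the Christoffel symbols $G^a_{Ab}$ in (\ref{eq25c}) (only the $a=0$ slot survives, giving $t_A=-\tfrac{1}{2R}\overset{\circ}{\nabla}_A(\overset{\circ}{\Delta}+2)\psi^0$), and then take the round-sphere divergence. The cancellation you flag as the main obstacle is handled implicitly in the paper by the same observation — the background torsion vanishes, so first-order corrections to $\mathbf{m}$ and $\varepsilon_{ab}$ only meet the vanishing zeroth-order normal part of $\overset{g}{\nabla}_A\mathbf{n}$ — and your signs and coefficients all check out against (\ref{eq23}), (\ref{eq25c}), and (\ref{eq26}).
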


\begin{proof}
Recall the definitions (\ref{m}) and (\ref{n}) of the first and second orthonormal vectors ${\bf  n}$, ${\bf  m}$ and the relation $ \|{\bf k}\| = |k^1| + O(\delta^2)$ by (\ref{eq23}), (\ref{eq27}) to derive
\begin{align*}
 m^0 &= - n^1 +  O(\delta^2) = 1 +  O(\delta^2) \ , \
m^1= - n^0 + O(\delta^2) = - \frac{1}{2R} \overset{\circ}{\Delta} \psi^0 +  O(\delta^2) \ , \\
\overset{g}{\nabla}_A {\bf  n} &= \overset{\eta}{\nabla}_A \left( n^a {\bf b}_a \right) =
\left( \partial_A n^a + G^a_{A b} n^b \right) {\bf b}_a
= \frac{1}{2R} \left( \overset{\circ}{\nabla}_A \overset{\circ}{\Delta} \psi^0
+ 2 \overset{\circ}{\nabla}_A \psi^0 \right) {\boldsymbol \beta}_0 +  O(\delta^2) \ .
\end{align*}
Here, the representations of the basis ${\bf b}_a$ and of the associated Christoffel symbols $G^a_{A b}$ in
(\ref{eq12b}) and (\ref{eq25c}), respectively, are used. From (\ref{eq6b})
\begin{align*}
 q &= \overset{\sigma}{\nabla}_A \left( \overset{\circ}{\eta}^{A B} \eta ( {\bf  m}, \overset{g}{\nabla}_B {\bf  n}) \right) \\
&= - \frac{1}{2R} \overset{\circ}{\nabla}^A m^0 \left( \overset{\circ}{\nabla}_A \overset{\circ}{\Delta} \psi^0
+ 2 \overset{\circ}{\nabla}_A \psi^0 \right) +  O(\delta^2)
= -\frac{1}{2R} \overset{\circ}{\Delta} ( \overset{\circ}{\Delta} +2 ) \psi^0
+  O(\delta^2)
\end{align*}
follows. Omitting the terms in $O(\delta^2)$, we find the desired
linearization.
\end{proof}

\section{Analysis of the linearized operator}

\subsection{The linearized operator}

The propositions derived in the previous section immediately yield the Fr\'echet
derivative ${\cal A} := {\cal D}_{x} {\cal F} ( \xi, \eta , T, R)$
of the nonlinear operator ${\cal F}$ defined in (\ref{eq14}). It is convenient to represent the operator ${\cal A}$ in the form
\begin{equation}
\label{eq31b}
{\cal A}= ({\cal A}_{A B} , {\cal A}^i , {\cal A}_{curv} , {\cal A}_{tors} )\, , \quad A,B=2,3; \ i=1,2,3\, ,
\end{equation}
where its components correspond to those ones of ${\cal F}$ in \eqref{eq14}.

\begin{thm}
\label{Thm1} Let  the parameters $T\in {\mathbb R}$ and $R>0$ be
fixed and let $ \overset{\circ}P_{md}$ ($ \overset{\circ}P_{m}$, resp. $ \overset{\circ}P_{d}$) denote
 the projections $ P_{md}$ ($ P_{m}$, resp. $ P_{d}$) transformed to $S^2$. Then
\begin{align*}
\text{\bf i)} \quad  & {\cal A}_{A B} [\psi]&&=
R^2 \left( \overset{\circ}{\nabla}_A \psi_B + \overset{\circ}{\nabla}_B \psi_A
- \overset{\circ}{\eta}_{A B} \overset{\circ}{\nabla}{_C} \psi^C \right) \quad \text{on }
S^2 \ ,& \\
\text{\bf ii)} \quad & {\cal A}^i [\psi]&&=
 \left\langle  \lambda^i ( \tfrac{4}{R} \psi^1 + \partial_A \psi^A  )   \right\rangle^{\circ}  \ ,& \\
\text{\bf iii)} \quad & {\cal A}_{curv} [\psi]&&=
- \frac{1}{R^2} ( \overset{\circ}{\Delta} +2 - 2 \overset{\circ}P_{m}) \psi^1 +
\frac{1}{R} (I- \overset{\circ}P_{md}) \, \overset{\circ}{\nabla}{_A} \psi^A \quad \text{on }
S^2 \ ,& \\
\text{\bf iv)} \quad & {\cal A}_{tors} [\psi]&&=
-\frac{1}{2R} \overset{\circ}{\Delta} (\overset{\circ}{\Delta} +2 )\psi^0 \quad \text{on }
S^2 \ .&
\end{align*}
\end{thm}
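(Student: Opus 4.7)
The plan is to simply assemble Theorem~\ref{Thm1} from the four linearization results already established in Section~3 (Propositions~\ref{Prop3}, \ref{Prop4}, \ref{Prop1}, \ref{Prop2}), adjusting only for the projections that enter the definition of the nonlinear operator ${\cal F}$ in \eqref{eq14}. Since the first two components of ${\cal F}$ carry no projection at all, parts \textbf{i)} and \textbf{ii)} are read off directly from Propositions~\ref{Prop3} and \ref{Prop4}, respectively. For parts \textbf{iii)} and \textbf{iv)} the task reduces to applying $(I-\overset{\circ}P_{md})$ to the scalar expressions for the linearized $k$ and $q$, and rewriting the result in the form stated.

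For part \textbf{iii)}, I would start from Proposition~\ref{Prop1}, so that the linearized curvature scalar at $\psi$ is
\[
\mathcal{L}_k[\psi] \;=\; -\tfrac{1}{R^2}(\overset{\circ}{\Delta}+2)\psi^1 + \tfrac{1}{R}\overset{\circ}{\nabla}{_A}\psi^A ,
\]
and apply $(I-\overset{\circ}P_{md})$ to it. The key observation, which I would isolate as a short lemma or inline calculation, is that $\overset{\circ}{\Delta}$ preserves the spectral decomposition into spherical-harmonic eigenspaces, the monopole $\lin\{1\}$ has eigenvalue $0$, and the dipole space ${\cal M}^3$ has eigenvalue $-2$ by \eqref{eigenvalue}. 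Consequently $(\overset{\circ}{\Delta}+2)$ annihilates the dipole part and acts as multiplication by $2$ on the monopole, so
\[
\overset{\circ}P_{md}\bigl((\overset{\circ}{\Delta}+2)\psi^1\bigr) \;=\; 2\,\overset{\circ}P_m\psi^1 ,
\]
giving $(I-\overset{\circ}P_{md})(\overset{\circ}{\Delta}+2)\psi^1 = (\overset{\circ}{\Delta}+2-2\overset{\circ}P_m)\psi^1$. The divergence term is simply left under an explicit $(I-\overset{\circ}P_{md})$, producing exactly the expression in \textbf{iii)}.

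For part \textbf{iv)}, I would begin from Proposition~\ref{Prop2},
\[
\mathcal{L}_q[\psi] \;=\; -\tfrac{1}{2R}\,\overset{\circ}{\Delta}(\overset{\circ}{\Delta}+2)\psi^0 ,
\]
and observe that the composite operator $\overset{\circ}{\Delta}(\overset{\circ}{\Delta}+2)$ is killed on monopoles by the outer $\overset{\circ}{\Delta}$ and on dipoles by the inner $(\overset{\circ}{\Delta}+2)$. Hence its range lies entirely in the wave part, and $(I-\overset{\circ}P_{md})$ acts as the identity on it. This is also consistent with Remark~\ref{Rem1}, which records that $q$ has no monopole component ab initio, so no extra cancellation is needed.

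There is no real analytic obstacle here; everything analytic has been done in Section~3. The only mild subtlety is ensuring that $\overset{\circ}P_{m}$ and $\overset{\circ}P_{d}$ commute with the relevant linear differential operators on each eigenspace, which follows from the self-adjointness of $\overset{\circ}{\Delta}$ in $L^2(S^2,\DO)$ and the fact that $P_m$, $P_d$ are the orthogonal projections onto its eigenspaces (cf.~Definition~\ref{Def0} and \eqref{mean}). The proof is therefore a short bookkeeping of these spectral facts combined with the four propositions of Section~3.
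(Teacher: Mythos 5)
Your proposal is correct and follows essentially the same route as the paper: parts \textbf{i)} and \textbf{ii)} are read off from Propositions~\ref{Prop3} and \ref{Prop4}, and parts \textbf{iii)} and \textbf{iv)} are obtained by applying $(I-\overset{\circ}P_{md})$ to the linearizations in Propositions~\ref{Prop1} and \ref{Prop2} and using the spectral identities $\overset{\circ}P_m\overset{\circ}{\Delta}=\overset{\circ}{\Delta}\overset{\circ}P_m=0$ and $\overset{\circ}P_d\overset{\circ}{\Delta}=\overset{\circ}{\Delta}\overset{\circ}P_d=-2\overset{\circ}P_d$, which is exactly the paper's argument (its display \eqref{eq21a}). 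Your explicit remark that the commutation of the projections with $\overset{\circ}{\Delta}$ rests on self-adjointness is a point the paper leaves implicit, but it is the same proof.
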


\begin{proof}
All relations are obvious. Only iii) and iv) require some comments. The linearity of the projection $I-P_{md}$ and
\propref{Prop1} imply the expression
\[
 - \frac{1}{R^2} (I- \overset{\circ}P_{md}) ( \overset{\circ}{\Delta} +2 ) \psi^1 +
\frac{1}{R} (I- \overset{\circ}P_{md}) \, \overset{\circ}{\nabla}{_A} \psi^A
\]
for the linearization of $k^{\mathbf{w}}$. Observe that, by (\ref{mon}) and (\ref{eq1c}),
\begin{equation}
\label{eq21}
\overset{\circ}P_m(f) = \frac{1}{4\pi} \int_{S^2} f \DO \ , \
 \overset{\circ}P_d(f) = \sum_{i=1}^3 \lambda^i \int_{S^2} \lambda^i f \DO
\end{equation}
for a function $f \in L^2(S^2,\DO)$. Since $1$ (resp. $\lambda^i$) is an eigenfunction to the eigenvalue $0$ (resp. $-2$ ) of the operator $\overset{\circ}{\Delta}$ (see \eqref{eigenvalue}), we have
\begin{equation}
\label{eq21a}
\overset{\circ}P_m \overset{\circ}{\Delta} = \overset{\circ}{\Delta} \overset{\circ}P_m = 0 \ , \
\overset{\circ}P_d \overset{\circ}{\Delta} = \overset{\circ}{\Delta} \overset{\circ}P_d = -2 \overset{\circ}P_d \ .
\end{equation}
Analogously, the linearization of $q^{\mathbf{w}}$  follows from \propref{Prop2}.
\end{proof}

In preparation of the next section, we analyze some basic features
of the linear operator ${\cal A}$ given by (\ref{eq31b}). In
particular, the kernel of this linear operator is needed. For this
purpose, let us solve the homogeneous linearized problem ${\cal A}[\psi]=0$ by decoupling the  components.

\subsection{Solution of the homogeneous linearized problem}

To determine the solution of ${\cal A}[\psi]=0$, we start with the last component ${\cal A}_{tors}[\psi]=0$ which is equivalent to
\begin{equation}
\label{eq29} ( \overset{\circ}{\Delta} +2) \overset{\circ}{\Delta} \psi^0 = 0 \ .
\end{equation}
This says that  $ \overset{\circ}{\Delta} \psi^0 $ has
to be an eigenfunction to the eigenvalue $-2$ of the operator
$\overset{\circ}{\Delta}$, i.e.~$ \overset{\circ}{\Delta} \psi^0 \in {\cal M}^3 \subset L^2(S^2,\DO)$.
Since $f \in  L^2(S^2,\DO)$ is constant iff
$\overset{\circ}{\Delta} f=0$, we deduce
\begin{equation}
\label{eq34}
\psi^0 \in  {\cal M}^4 \ .
\end{equation}
The same argument is used to solve ${\cal A}_{AB}[\psi]=0$ and yields the tangent part $(\psi^A)$ of the vector field $\psi$. If we apply $\overset{\circ}{\nabla}{^A} $ to
\begin{gather}
\label{eq37}
\overset{\circ}{\nabla}_A \psi_B + \overset{\circ}{\nabla}_B \psi_A
- \overset{\circ}{\eta}_{A B} \overset{\circ}{\nabla}{_C} \psi^C =0 \, ,
\intertext{we get}
\label{eq34a}
 \overset{\circ}{\nabla}{^A} \overset{\circ}{\nabla}_A \psi_B +
\psi_B = 0 \, .
\end{gather}
Here, the covariant derivatives $\overset{\circ}{\nabla}{^A} \overset{\circ}{\nabla}_B $ were commuted and hence the values of curvature tensor $\overset{\circ}{R}_{ABCD}$ as well as the thoses ones of Ricci tensor $\overset{\circ}{R}_{AB}$ on the standard sphere $S^2$ have to be observed ($\overset{\circ}{R}_{AB} = \overset{\circ}{\eta}_{AB}$). Due to Hodge decomposition
we have the representation (cf. \cite[Relation (A.1)]{GJKL}
\begin{equation}
\label{eq32}
\psi_A = \partial_A f^{(1)} + \overset{\circ}{\varepsilon}_A{^B}\partial_B f^{(2)} \, ,
\end{equation}
where $f^{(i)}$, $i=1,2$, are smooth periodic functions and
$\overset{\circ}{\varepsilon}{_{AB}} = \sqrt{\det \overset{\circ}{\eta}} \, E_{AB}$ is the Levi-Civita tensor for $\overset{\circ}{\eta}$ on $S^2$.
Using this in (\ref{eq34a}) and differentiating again,
a short calculation leads to the two equations
\begin{equation}
\label{eq32a}
( \overset{\circ}{\Delta} + 2 ) \overset{\circ}{\Delta} f^{(i)}  = 0 \ .
\end{equation}
They are the same as  (\ref{eq29}) and the analogon of (\ref{eq34}) says
$f^{(i)} \in  {\cal M}^4$. The result for $\psi_A$ can be written symbolically in the form
\begin{equation}
\label{eq33}
\psi_A \in d {\cal M}^3 \oplus *d {\cal M}^3
\end{equation}
according to the decomposition (\ref{eq32}). Let us notice that
$ \overset{\circ}{\nabla}{_A} \psi^A = \overset{\circ}{\Delta} f^{(1)} \in  {\cal M}^4 $. Consequently, the term  $(I- \overset{\circ}P_{md}) \, \overset{\circ}{\nabla}{_A} \psi^A$ vanishes in the equation ${\cal A}_{curv}[\psi]=0$ and
\begin{equation}
\label{eq29a}
( \overset{\circ}{\Delta} +2 - 2 \overset{\circ}P_{m}) \psi^1 = 0
\end{equation}
remains left. Applying $\overset{\circ}{\Delta}$ to the both sides of (\ref{eq29a}), we deduce equation (\ref{eq29}) for $\psi^1$ (instead of $\psi^0$), and hence,
\begin{equation}
\label{eq30}
\psi^1  \in  {\cal M}^4
\end{equation}
by (\ref{eq34}). Here, the first relation of (\ref{eq21a}) has to be observed.


\begin{rem}
\label{Rem5} The above results (\ref{eq34}), (\ref{eq33}), and (\ref{eq30}) show that the solutions $\psi$ of the homogeneous linearized problem
${\cal A}[\psi]=0$ generate a fourteen-parameter family. Each component $\psi^{\alpha} $ is a linear combination of the spherical
harmonics $\lambda^1, \lambda^2 , \lambda^3$, and $1$ or their derivatives. Altogether they contain fourteen real constants independent of $R, T$. But, due to the three linear equations ${\cal A}^i[\psi]=0$, $i=1,2,3$, some of the constants are coupled, as we will see in the sequel.
Hence, there are eleven free parameters in the solution $\psi$. They can be interpreted in the way described below. \\
The three parameters contained in the dipole part of
$\overset{\circ}{\nabla}_A (\overset{\circ}{\varepsilon}{^A}_B \psi^B)$ represent possible rotations of the original sphere. They keep the sphere unchanged. The remaining eight parameters describe its possible deformations. In particular,
the monopole part of  $\psi^1$ describes change of its size due to dilation.
The dipole part of  $\overset{\circ}{\nabla}_A\psi^A$ generates uniquely the dipole part of $\psi^1$. The corresponding three parameters describe possible space translations of the original sphere. The monopole part of $\psi^0$ describes its time translation. Finally, the three parameters contained in the dipole part of $\psi^0$ describe its boost transformations in Minkowski space.
\end{rem}

To solve ${\cal A}^i[\psi]= \left\langle  \lambda^i ( \tfrac{4}{R} \psi^1 + \partial_A \psi^A  )   \right\rangle^{\circ} =0$, we exploit the representation of the radial part $\psi^1$ given by (\ref{eq30}).  Let $\psi^1 = c+ \sum_{i=1}^3 c_i \lambda^i $
with real constants $c,c_1,c_2,c_3$. Then $ \left\langle  \lambda^i \psi^1  \right\rangle^{\circ} = \frac{1}{4\pi} c_i$ since
$ \lambda^1,\lambda^2,\lambda^3 $ form an orthonormal system in $L^2(S^2,\DO)$ and $ \left\langle  \lambda^i  \right\rangle^{\circ} =0$ by
(\ref{mean}). Therefore, only the constant $c$ is free whereas $c_i = - \pi R  \left\langle  \lambda^i \partial_A \psi^A  \right\rangle^{\circ}$. This and the definitions (\ref{staverage}), (\ref{eq21}) imply
\begin{equation}
\label{eq31}
\psi^1 = c - \tfrac{R}{4} \overset{\circ}P_d(\partial_A \psi^A) \ .
\end{equation}

\begin{rem}
\label{REMinhom}
The above calculations also show that, within the subspace ${\cal M}^4$, the function
\begin{equation}
\label{eq31a}
h = c - \tfrac{R}{4} \overset{\circ}P_d(\partial_A \psi^A) + \pi R \sum_{i=1}^3 r_i \lambda^i \ , \
c \in {\mathbb R} \ ,
\end{equation}
is the general solution of the inhomogeneous linear system
\begin{equation}
\label{eq32b}
 \left\langle  \lambda^i ( \tfrac{4}{R} h + \partial_A \psi^A  )   \right\rangle^{\circ} = r_i \ , \ i=1,2,3 \ .
\end{equation}
Here, the constants $r_i$ and the field $(\psi^A)$ are given.
\end{rem}

\subsection{Properties of the linearized operator}

Up to now, all unknowns in the preceding relations
 are assumed to be sufficiently
 smooth functions defined on
$S^2.$ Naturally, we may consider ${\cal A}$ as
a mapping between different Sobolev spaces $H^n := W^{2,n}
(S^2,\DO),$ $n=0,1, \dots$, which are the collections
of all functions $f: S^2 \rightarrow \mathbb R $ possessing
generalized partial derivatives $ D^{\beta}f \in L^2(S^2,\DO)$ for each multiindex $\beta$ with $|\beta| \leq n$.
Note that $H^n $ is a Hilbert space
with scalar product
\[
\left\langle f_1,f_2 \right\rangle_{H^n} = \sum_{|\beta| \leq n} \int_{S^2} D^{\beta}f_1 D^{\beta}f_2 \DO
\quad \text{for } f_1,f_2 \in H^n \, ,
\]
and that the space of smooth functions is dense in each $H^n$ with respect to the corresponding norm. Moreover, $H^{n+2}$ is compactly imbedded in the H{\"o}lder space $C^{n, \nu} (S^2),$ $\nu \in (0,1),$ and
constitutes an algebra of functions (cf., e.g., \cite[Vol.~1, pp 281]{T}).\footnote{Observe that \cite[Theorem 2]{GJKL} on the existence of a unique equilibrated spherical system on a topological sphere $S^2$ is also valid within Sobolev framework (substitute $C^{k, \alpha} (S^2)$ with
$H^{k+2}$, $k \geq 1$).}

In association with $H^n $ we introduce the space $T^n $ of two-fold covariant symmetric traceless tensor fields living on the standard sphere $S^2$
\begin{equation}
\label{eq44}
T^n := \left\lbrace Z=( Z_{A B} ) \left| Z_{A B} =Z_{B A} \in H^n  \text{ for } A,B \in \{2,3 \} , \
\tr Z = 0  \right. \right\rbrace .
\end{equation}
We summarize the essential properties concerning ${\cal A}$ in the following
theorem.
\begin{thm}
\label{Thm2} Let the operator ${\cal A}$ be defined by
(\ref{eq31b}) and let $D({\cal A}),$ $N({\cal A}),$ $R({\cal A})$
denote its domain, null space and range, respectively. \\
If $D({\cal A}) = (H^7)^4 $, then
\begin{align*}
\text{\bf a)} \quad &N({\cal A}) = \{ \psi = (\psi^{\alpha}) |
\ \psi^0 \in {\cal M}^4  , \ \psi_A \in d {\cal M}^3 \oplus *d {\cal M}^3 ,
\ \psi^1 \in - \tfrac{R}{4} \overset{\circ}P_d(\partial_A \psi^A) + \lin \{1 \}
\} \, . \\
\text{\bf b)} \quad
&N({\cal A}) \text{ is an eleven-dimensional subspace of } D({\cal A}) \, . \\
\text{\bf c)} \quad &R({\cal A}) = T^6 \times {\mathbb R}^3 \times
(I- \overset{\circ}P_{md}) H^5 \times (I- \overset{\circ}P_{md}) H^3 \, , \\
&\text{where }
(I- \overset{\circ}P_{md}) H^n \text{ is the $L^2$-orthogonal complement of } {\cal M}^4 \text{
in the subspace } \\
&H^n \subset L^2(S^2,\DO) \, . \\
\text{\bf d)} \quad &{\cal A} \text{ is a Fredholm operator of
index 0}.
\end{align*}
\end{thm}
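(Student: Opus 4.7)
The plan is to assemble the homogeneous analysis of Subsection~4.2 into the explicit characterizations of $N({\cal A})$ and $R({\cal A})$, and then deduce the Fredholm property from Douglis--Nirenberg (DN) ellipticity combined with the resulting dimension counts.

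For parts a) and b), I will proceed hierarchically, as in Subsection~4.2. The torsion equation reduces via \eqref{eq29} to $\psi^0\in{\cal M}^4$. Applying $\overset{\circ}{\nabla}^A$ to ${\cal A}_{AB}[\psi]=0$, commuting covariant derivatives on $S^2$, and using the Hodge decomposition \eqref{eq32}, yields the bi-Laplacian system \eqref{eq32a} and hence $\psi_A\in d{\cal M}^3\oplus *d{\cal M}^3$ per \eqref{eq33}. For such $\psi_A$, $\overset{\circ}{\nabla}_A\psi^A\in{\cal M}^3$, so the coupling term in ${\cal A}_{curv}$ vanishes and \eqref{eq29a} gives $\psi^1\in{\cal M}^4$. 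Finally, ${\cal A}^i[\psi]=0$ fixes the dipole coefficients of $\psi^1$ via \eqref{eq31}, leaving only the monopole constant of $\psi^1$ free. The count is $4+6+1=11$.

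For part c), the inclusion $R({\cal A})\subset T^6\times\mathbb{R}^3\times(I-\overset{\circ}P_{md})H^5\times(I-\overset{\circ}P_{md})H^3$ is immediate: ${\cal A}_{AB}[\psi]$ is symmetric and traceless by inspection, and the spectral identities \eqref{eq21a} yield $\overset{\circ}P_m{\cal A}_{curv}=\overset{\circ}P_d{\cal A}_{curv}=0$ and analogous vanishings for ${\cal A}_{tors}$. For surjectivity, I will construct $\psi$ from a given target $(Z,r,u,v)$ in four sequential steps. First, $\overset{\circ}{\Delta}(\overset{\circ}{\Delta}+2)$ is an isomorphism $(I-\overset{\circ}P_{md})H^7\to(I-\overset{\circ}P_{md})H^3$ by the spherical-harmonic spectral decomposition, so ${\cal A}_{tors}[\psi]=v$ determines $(I-\overset{\circ}P_{md})\psi^0$. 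Second, the conformal Killing operator on $S^2$ is surjective onto $T^6$---its $L^2$-cokernel, the space of transverse-traceless symmetric $(0,2)$-tensors, vanishes on a genus-zero surface by Riemann--Roch---so ${\cal A}_{AB}[\psi]=Z_{AB}$ is solvable. Third, with $\psi^A$ fixed, ${\cal A}_{curv}[\psi]=u$ becomes $(\overset{\circ}{\Delta}+2-2\overset{\circ}P_m)\psi^1=-R^2 u+R(I-\overset{\circ}P_{md})\overset{\circ}{\nabla}_A\psi^A$, whose right-hand side lies in $(I-\overset{\circ}P_{md})H^5$; inversion on that subspace gives $(I-\overset{\circ}P_{md})\psi^1$. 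Fourth, the dipole component of $\psi^1$, still undetermined, is uniquely fixed to satisfy ${\cal A}^i[\psi]=r^i$ as in \remref{REMinhom}.

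For part d), Fredholmness follows from b) (finite-dimensional kernel) and c) (closed range with finite codimension in the natural codomain). To obtain index zero I will invoke DN ellipticity of ${\cal A}$ under the column weights $(s_0,s_1,s_A)=(4,2,1)$, matching the orders of $\overset{\circ}{\Delta}(\overset{\circ}{\Delta}+2)$, $\overset{\circ}{\Delta}+2-2\overset{\circ}P_m$, and the conformal Killing operator; the principal symbol is then block-invertible at every nonzero cotangent vector. The vanishing of the index then encodes the precise matching between the eleven-parameter kernel and the DN-adapted cokernel structure, with the three rotational kernel directions of $S^2$ paired with the finite-rank ${\cal A}^i$ output and the remaining eight deformational directions paired with the $\dim{\cal M}^4+\dim{\cal M}^4=8$ obstructions in the two scalar-range factors.

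The main obstacle is the third step of surjectivity in part c): the coupling between ${\cal A}_{curv}$ and $\psi_A$ could in principle take the right-hand side outside the subspace where $\overset{\circ}{\Delta}+2-2\overset{\circ}P_m$ is invertible, but the explicit factor $(I-\overset{\circ}P_{md})$ built into ${\cal A}_{curv}$ (cf.~\thmref{Thm1}~iii)) keeps it safely there. A secondary subtlety is the index assertion in d), which requires the DN framework rather than a naive codimension count in $T^6\times\mathbb{R}^3\times H^5\times H^3$.
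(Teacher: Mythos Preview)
For parts a)--c) your argument tracks the paper's proof closely. Both derive the kernel from the hierarchy of Subsection~4.2, and both establish surjectivity by solving sequentially for $\psi^0$, then $\psi^A$, then $\psi^1$, with the dipole of $\psi^1$ fixed last via \remref{REMinhom}. The one substantive variation is your justification for $R({\cal A}_{AB})=T^6$: the paper cites \cite[Appendix~A.2]{GJKL}, where the conformal Killing operator on $S^2$ is treated by a direct Hodge-theoretic computation, whereas you invoke the vanishing of holomorphic quadratic differentials on a genus-zero surface via Riemann--Roch. Both routes are valid; yours is more conceptual, the paper's more self-contained.

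Part d) is where your approach genuinely departs, and where there is a gap. The paper does not use Douglis--Nirenberg theory at all: it simply groups d) with c) and regards the Fredholm property as an immediate consequence of the explicit finite-dimensional kernel from b) and the explicit range description from c). Your DN framework is a reasonable alternative for establishing Fredholmness, but DN ellipticity alone does not yield the numerical value of the index---that would require an index theorem or a deformation to a model operator, neither of which you supply. The closing ``pairing'' heuristic (three rotational kernel directions matched against the $\mathbb{R}^3$ output, eight deformational directions matched against two copies of ${\cal M}^4$) is suggestive but is not a computation of the cokernel; once c) shows that ${\cal A}$ surjects onto the stated target ${\cal H}$, the cokernel relative to that codomain is zero, so the bookkeeping you sketch does not close as written. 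If you want to retain the DN viewpoint, you should either carry the index computation through explicitly or, as the paper does, simply read off the Fredholm data from the kernel and range already obtained in a)--c).
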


\begin{proof}
  {\em Ad a),b)} The assertions
  follows from (\ref{eq34}), (\ref{eq33}), (\ref{eq31}) and \remref{Rem5}.

  {\em Ad c),d) Step 1.} We follow the lines of the considerations in Subsection 4.2
 which give the solution of the homogeneous problem. The arguments are based on the properties of the operator
$- \overset{\circ}{\Delta}$.  It is non-negative, selfadjoint, and the
Fredholm alternative holds. Hence, the equation
\[
(\overset{\circ}{\Delta} + 2) h = f \quad \left(
\overset{\circ}{\Delta} h = f \right)
\]
possesses a solution $h \in H^{n+2}$ iff $f$ is from $H^n$ and is $L^2$-orthogonal to all
eigenfunctions of $\overset{\circ}{\Delta}$ to the eigenvalue $-2$
(resp. 0), i.e. $L^2$-orthogonal to $\lambda^1, \lambda^2 ,\lambda^3$ (resp. to 1).
This is equivalent to $f \in (I- \overset{\circ}P_{d}) H^n$
(resp. $f \in (I- \overset{\circ}P_{m}) H^n$).
Hence,  the assertions on the component ${\cal A}_{tors}$ are obvious.

{\em Step 2.}
To see that $R({\cal A}_{AB})=T^6$, we refer to \cite[Appendix  A.2]{GJKL}. There it is shown that the differential operator defining ${\cal A}_{AB}$ represents an isomorphism between covector fields and symmetric traceless tensors living on $S^2$. Only the slight modifaction of using Sobolev spaces instead H{\"o}lder ones is necessary.


{\em Step 3.} We consider the remaining components ${\cal A}^i$, ${\cal A}_{curv}$ of the operator ${\cal A}$ and solve ${\cal A}^i[\psi]=s_i$, ${\cal A}_{curv}[\psi]=g $ for the radial part
$\psi^1$ provided $s_1,s_2,s_3 \in \mathbb{R}$ and $g \in (I- \overset{\circ}P_{md}) H^5$ are given.
The last relation reduces to
\begin{gather}
\label{eq47}
( \overset{\circ}{\Delta} +2 - 2 \overset{\circ}P_{m}) \psi^1 = f
\intertext{or, equivalently,}
\label{eq46}
\overset{\circ}{\Delta} (\overset{\circ}{\Delta} +2 )\psi^1 = \overset{\circ}{\Delta} f \text{ with }
f \in (I- \overset{\circ}P_{md}) H^5 \ .
\end{gather}
Observe that $\overset{\circ}P_{m} f =0$. Step 1  implies that (\ref{eq46}) has a solution $\psi^1 \in H^7$ iff $ \overset{\circ}{\Delta} f \in (I- \overset{\circ}P_{md}) H^3$. This condition is fulfilled for any $f \in (I- \overset{\circ}P_{md}) H^5 $ since
\[
\overset{\circ}P_{md} \overset{\circ}{\Delta} f
= - 2 \overset{\circ} P_d f =0
\]
by (\ref{eq21a}). We notice that $\psi^1$ is not uniquely determined. In fact, if $h \in {\cal M}^4$ , the function $\psi^1 + h$ solves the original equation (\ref{eq47}), too.
Put $r_i=s_i - \tfrac{4}{R} \left\langle  \lambda^i  \psi^1 \right\rangle^{\circ} $ in relation
(\ref{eq32b}) and apply \remref{REMinhom}. Using the corresponding $h$ from (\ref{eq31a}),
we obtain a solution $\psi^1 + h$ of the linear system ${\cal A}^i[\psi]=s_i$, $i=1,2,3$.
\end{proof}

\section{The nonlinear problem}

\subsection{Application of the implicit function theorem}

The properties of the linearized operator ${\cal A}$ listed in the fundamental
\thmref{Thm2} enable us to apply a variant of the implicit
function theorem, namely the {\em surjective implicit
function theorem} (cf. \cite[Vol. 1, pp 176)]{Z}. We consider the
nonlinear operator ${\cal F}$ in (\ref{eq14}) as a mapping which acts
on the coordinates $x^{\alpha} \in H^7$, on the coordinates
$g_{\alpha \beta} \in H^6$ of the metric tensor $g,$ and on the
parameters $T$, $R.$

\begin{thm}
\label{Thm3} There is a constant $\varepsilon >0$ such that, for
each metric $g$ of $M$ with $\Vert g - \eta \Vert
_{H^6} < \varepsilon$ and each $T \in \mathbb{R},R>0$, rigid spheres ${\cal S}$ in
the sense of \defnref{rigid} exist. These spheres are pertubations
of the round sphere $S_{T,R}$ and depend on additional eight
parameters.
\end{thm}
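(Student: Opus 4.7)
The plan is to apply the surjective implicit function theorem (cf.~\cite[Vol.~1]{Z}) to the nonlinear operator ${\cal F}$ of \eqref{eq14} at $(\xi,\eta,T,R)$, where ${\cal F}(\xi,\eta,T,R)=0$ by the discussion of Subsection~3.1. Following Subsection~4.3 I fix the Sobolev framework $x^\alpha\in H^7$, $g_{\alpha\beta}\in H^6$, and regard ${\cal F}$ as a map into $T^6\times\mathbb{R}^3\times(I-\overset{\circ}P_{md})H^5\times(I-\overset{\circ}P_{md})H^3$. This choice of indices is forced by the differential orders of the four components listed in \remref{Rem2} (first, zero, second, fourth, respectively) together with the Banach-algebra property of $H^{k+2}$ and the embedding $H^{k+2}\hookrightarrow C^{k,\nu}$ recalled before~\eqref{eq44}. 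The first substantive step is to verify that ${\cal F}$ is of class $C^1$ in all its arguments on some neighbourhood of $(\xi,\eta,T,R)$. This amounts to checking that every nonlinear operation appearing in ${\cal F}$---inversion of the positive matrix $s_{AB}$, the square root $\sqrt{k^a s_{ab}k^b}$ (nonvanishing at $\xi$ by~\eqref{eq13}), formation of the Christoffel symbols $G^\alpha_{\beta\gamma}$ of $g$, and the finite-dimensional projections $P_{md}$---acts smoothly between the stated Sobolev scales. I expect this bookkeeping to be the main technical obstacle: the torsion component $q^{\mathbf{w}}$ is \emph{fourth order} in $x^\alpha$, which is precisely what dictates both the high regularity $H^7$ on $x^\alpha$ and the companion regularity $H^6$ on $g_{\alpha\beta}$.

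Once ${\cal F}\in C^1$ is secured, \thmref{Thm1} identifies the partial Fr\'echet derivative ${\cal D}_x{\cal F}(\xi,\eta,T,R)$ with the operator ${\cal A}$. By \thmref{Thm2}(c), ${\cal A}$ is surjective onto the target space of ${\cal F}$, and by \thmref{Thm2}(a)--(b) the null space $N({\cal A})$ is a closed $11$-dimensional subspace of the Hilbert space $(H^7)^4$; hence $N({\cal A})$ admits an $L^2$-orthogonal topological complement. These are exactly the hypotheses of the surjective implicit function theorem. Applying it yields an $\varepsilon>0$ and a $C^1$ map that, near $(\xi,\eta,T,R)$, parameterizes $\{{\cal F}=0\}$ by an open subset of $N({\cal A})$. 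Concretely, for each metric $g$ with $\|g-\eta\|_{H^6}<\varepsilon$ and each $T\in\mathbb{R}$, $R>0$, the zero set of ${\cal F}(\cdot,g,T,R)$ near $\xi$ is an $11$-dimensional $C^1$-submanifold of $(H^7)^4$, every point of which defines, via its image in $M$, a topological sphere ${\cal S}$ satisfying all four conditions of \defnref{rigid}.

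It remains to reconcile this $11$-parameter manifold of solutions with the claimed $8$-parameter family. As pointed out in \remref{Rem5}, three of the eleven kernel directions---the dipole part of the solenoidal summand $\overset{\circ}{\varepsilon}_A{^B}\partial_B f^{(2)}$ in the Hodge decomposition \eqref{eq32}---merely rotate the spherical parameterization $F$ and leave the image $F(S^2)\subset M$ pointwise unchanged; they reflect the $SO(3)$-ambiguity in the equilibrated spherical parameterization (which is unique only up to rotations, by \cite[Theorem~1]{GJKL}). Quotienting the $11$-dimensional solution manifold by this finite-dimensional gauge action---most concretely by imposing three scalar side conditions that fix the dipole part of $\overset{\circ}{\varepsilon}{}^{AB}\overset{\circ}{\nabla}_A\psi_B$---cuts out an $8$-dimensional family of geometrically distinct rigid spheres perturbing $S_{T,R}$, as asserted.
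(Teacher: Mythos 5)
Your proposal follows essentially the same route as the paper: the surjective implicit function theorem of Zeidler applied to ${\cal F}$ on $(H^7)^4\times(H^6)^{10}\times\mathbb{R}^2$ with target ${\cal H}=T^6\times\mathbb{R}^3\times(I-\overset{\circ}P_{md})H^5\times(I-\overset{\circ}P_{md})H^3$, using \thmref{Thm2} for the surjectivity of ${\cal A}$ and the splitting $D({\cal A})=N({\cal A})\oplus N({\cal A})^{\bot}$, and then discounting the three rotational kernel directions to arrive at eight genuine parameters, exactly as in the paper's proof and its concluding remark. Your explicit attention to the $C^1$-regularity of ${\cal F}$ between the chosen Sobolev scales is a point the paper passes over quickly, but it does not change the argument.
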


\begin{proof}
  {\em a)} We show that the nonlinear equation (\ref{eq14}) possesses a solution of
the form $x = \xi + \psi^{\alpha} {\boldsymbol \beta}_{\alpha}
(\xi)$ due to
(\ref{eq25}) and apply Theorem 4.H from \cite[Vol.~1, pp 176]{Z}.
To check  the assumptions of this theorem, let the operator ${\cal
F}$ be defined on the Hilbert space
$(H^7)^4 \times (H^6)^{10} \times \mathbb{R}^2$.
Then ${\cal F}$ maps this space into the closed subspace
\[
 {\cal H} := T^6 \times {\mathbb R}^3 \times
(I- \overset{\circ}P_{md}) H^5 \times (I- \overset{\circ}P_{md}) H^3
\]
of $(H^6)^2 \times {\mathbb R}^3 \times H^5 \times H^3$. This follows
from the definition of ${\cal F}$ in (\ref{eq14}).

  {\em b)} The crucial point is the surjectivity of
${\cal A}: \ D({\cal A}) \rightarrow {\cal H}$ by \thmref{Thm2}c)
and the splitting property of $N({\cal A}) $, i.e.
\[
D({\cal A}) = N({\cal A}) \oplus N({\cal A})^{\bot} \
.
\]
According to this decomposition, we have
\begin{equation}
\label{eq35}
\psi = \rho + \chi \, , \quad \text{where} \quad \rho = (\rho^{\alpha}) \in N({\cal A}) \, ,
\ \chi = (\chi^{\alpha}) \in N({\cal A})^{\bot} \, .
\end{equation}
The cited Theorem 4.H yields the existence of a solution $\chi$ of
\begin{equation}
\label{eq36} {\cal F} \left( \xi + \rho^{\alpha} {\boldsymbol \beta}_{\alpha}(\xi) + \chi^{\alpha} {\boldsymbol \beta}_{\alpha}(\xi) ,g,T,R \right) = 0 \ .
\end{equation}
The solution depends on $\rho, g,T,R$ and is defined
on a neighbourhood of $(0,\eta,T,R)$. Moreover, at this point, the vector field
$\chi$ vanishes.
\end{proof}

\begin{rem}
\label{Rem6}
The idea of application of the surjective implicit
function theorem to prove our main theorem is equivalent to the considerations in \cite[Subsection 3.3]{GJKL}, where H{\"o}lder spaces are substituted with appropriate Sobolev spaces. We point out that an analogous result to Theorem \ref{Thm3} does also hold within
H{\"o}lder framework.
\end{rem}

\subsection{Approximation of the solution}

For the readers convenience let us sketch how to construct a solution to the nonlinear operator equation (\ref{eq14}). This procedure also yields a method to approximate this solution successively. It is the so-called \emph{Newton's
simplified method} for (\ref{eq36}), which produces solutions of the form (\ref{eq35}).

Let a metric $g$ of $M$ with $\Vert g - \eta \Vert
_{H^6} < \varepsilon$ be given, fix $T \in \mathbb{R},R>0$ and $\rho   \in N({\cal A})$. Denote the approximation of the solution $\chi$ of (\ref{eq36}) in the $n$-th iteration step by $\chi^{(n)}$. Then,
it is recursively defined by
\begin{equation}
\label{appr}
\chi^{(n+1)} := \chi^{(n)} - {\cal B}^{-1} {\cal F} \left( \xi + \rho + \chi^{(n)} ,g,T,R \right) \quad \text{for} \ n =0,1,\ldots \ ; \ \chi^{(0)}=0 \, ,
\end{equation}
where ${\cal B}$ is the restriction of the linearized operator ${\cal A}$ to the orthogonal complement $N({\cal A})^{\bot}$ in $D({\cal A})$. Since $\chi^{(n)} \in N({\cal A})^{\bot}$, relation (\ref{appr}) is equivalent to
\[
 {\cal A} \left[  \chi^{(n+1)} - \chi^{(n)} \right] = {\cal B} \left[ \chi^{(n+1)} - \chi^{(n)} \right]
= {\cal F} \left( \xi + \rho + \chi^{(n)} ,g,T,R \right) \ .
\]
Hence, we determine a solution $\delta^{(n)}$ of the inhomogeneous problem $ {\cal A}[\delta^{(n)}] = {\cal F}^{(n)} $ from the Hilbert space $(H^7)^4 $, i.e., in each iteration step we solve the same linear equations on $S^2$ with different right-hand sides
\[
{\cal F}^{(n)} = \left(\t0 s_{AB}^{(n)}, {\bf X}^{(n)},
(I- \overset{\circ}P_{md})k^{(n)} , (I- \overset{\circ}P_{md}) q^{(n)} \right) :=   {\cal F} \left( \xi + \rho + \chi^{(n)} ,g,T,R \right) \ .
\]
This can be carried out by the calculations described in Subsection 4.2 and in the proof of \thmref{Thm2}c).
Explicitly, we get
\begin{align*}
\text{first } {\delta^{(n)}}^0 & \text{ from} \quad {\cal A}_{tors}[\delta^{(n)}] = (I- \overset{\circ}P_{md}) q^{(n)} \ , \\
\text{then } {\delta^{(n)}}^A & \text{ from} \quad {\cal A}_{AB}[\delta^{(n)}] = \t0 s_{AB}^{(n)} \ , \ A,B=2,3 \ , \\
\text{finally } {\delta^{(n)}}^1 & \text{ from} \quad {\cal A}_{curv}[\delta^{(n)}] =
(I- \overset{\circ}P_{md})k^{(n)} \ , \\
& \ \text{ and} \quad {\cal A}^i[\delta^{(n)}] = {{\bf X}^{(n)}}^i \ , \ i=1,2,3 \ .
\end{align*}
Note that, due to \thmref{Thm2}b), the general solution $\delta^{(n)}$ contains eleven free parameters. With $\delta^{(n)}$ we obtain the next correction of approximation $\chi^{(n)}$ defined in (\ref{appr}) by
\[
 \chi^{(n+1)} = \chi^{(n)} + (I- P_{N({\cal A})}) \delta^{(n)} \, ,
\]
where $P_{N({\cal A})}$ denotes the projection to the null space $N({\cal A})$ in the domain $D({\cal A})$ of the linearized operator ${\cal A}$. This projection rules out the ambiguity in the  parameters and $\chi^{(n+1)}$ is uniquely determined.

\begin{rem}
\label{REMKonv} The convergence of the  approximating sequence  $\chi^{(n)}$ to a solution $\chi$ of (\ref{eq36})
follows from Problem 5.1 in \cite[Vol.~1, p.~214]{Z} and is based on the Banach fixed point theorem.
\end{rem}

\begin{rem}
\label{REMeleven} In the eleven-dimensional space of solutions, there is a three-dimensional subspace corresponding to pure rotations. Hence, the space of different rigid spheres is eight-dimensional.
\end{rem}

\paragraph*{Acknowledgement.}
We are deeply indebted to Eberhard Zeidler, Max-Planck-Institute for Mathematics in the Sciences, Leipzig, for fruitful discussions and valuable help. This work was supported in part by Narodowe Centrum Nauki (Poland) under Grant No.
DEC-2011/03/B/ST1/02625.




\end{document}